\makeatletter \renewenvironment{proof}[1][\proofname]
{\par\pushQED{\qed}\normalfont\topsep6\p@\@plus6\p@\relax\trivlist\item[\hskip\labelsep\bfseries#1\@addpunct{.}]\ignorespaces}{\popQED\endtrivlist\@endpefalse} \makeatother
\theoremstyle{plain}
\newtheorem{thm}{Theorem}
\newtheorem{lem}{Lemma}
\newtheorem{prop}{Proposition}
\begin{document}
\author{Yangbo Song\thanks{School of Management and Economics, the Chinese University of Hong Kong (Shenzhen). Email: yangbosong@cuhk.edu.cn.}
\and
Mofei Zhao\thanks{Graduate School at Shenzhen, Tsinghua University. Email: 	ellipsis010011@sina.com.}}
\date{\today}
\title{Dynamic R\&D Competition under Uncertainty and Strategic Disclosure}
\maketitle

\begin{abstract}

We study a model of dynamic two-stage R\&D competition where the competing firms are uncertain about the difficulty of the first stage. Staying in the competition is costly and a firm can also choose whether and when to quit. When a firm solves the first stage, it can choose whether and when to disclose the solution. We find that there exists a unique symmetric equilibrium, in which each firm will disclose the solution of the first stage if it arrives early, withhold the solution if it arrives neither too soon nor too late, and exit the competition if it has not arrived after a sufficiently long time.  From a social welfare perspective, a competition is surprisingly not always optimal: in certain scenarios, it is socially more desirable to assign the R\&D project to a single firm.

\begin{flushleft}
\textbf{Keywords: R\&D competition, Uncertainty, Social Welfare}
\end{flushleft}

\begin{flushleft}
\textbf{JEL Classification: D83, L13, O31}
\end{flushleft}
\end{abstract}

\newpage

\section{Introduction}

In a multi-stage R\&D process, uncertainty is a prominent feature of early but important stages. For instance, to build a revolutionary fighter jet, the manufacturing firm's research team has to first develop a new supersonic-speed engine, whose success likelihood as well as time and resource needed are often unknown. This uncertainty has to be resolved via learning over time: if the firm has spent a long time in vain on a stage, it comes to realize that this stage is probably hard, even unfeasible; on the contrary, when it solves a stage quickly it knows that the stage is solvable and most likely to be easy. When more than one firm competes for such a multi-stage R\&D project, interesting questions arise regarding the firms' strategic behavior. First, staying in the research process is costly in terms of time and resource that can be otherwise spent elsewhere for profits. In view of this, when should a firm quit the competition? Second, when a firm solves an early stage it faces the choice of whether to disclose its findings (by filing a patent, for instance). Doing so secures the relevant economic value for the firm if its competitors have not yet also completed the stage, but may also result in technological spillover so that all competitors can work on the next stage now and the firm loses its leading position in the race. On the other hand, keeping the solution private gives the firm the advantage of working on the next stage alone, but puts the payoff from the prior stage at risk because some competitor may just work it out soon enough and claim a fair share. How then, should a firm decide when to disclose a solution (or not)?

In this paper, we propose a first model to comprehensively analyze firms' strategic decisions in a dynamic R\&D competition under uncertainty, and to answer the above questions. We consider two firms competing on a continuous time line for a research project consisting of two stages, the first being an innovative stage to develop an intermediate product and the second a commercial stage to release a final product. The solution to the first stage is required for working on the second. The success on each stage arrives in a Poisson rate, which is fixed for the second stage, but is uncertain for the first stage: it is either positive (feasible) or zero (unfeasible). Whenever a firm solves a stage, it can either disclose or withhold the solution. Disclosure entails technology spillover so that both firms can start working on the next stage. Each stage has its own value, which is wholly captured by the firm which discloses the solution if its opponent has not solved it yet, and is equally shared if both firms have the solution by the time either firm discloses it. Last but not least, there is a constant research cost per unit time to work on the first stage, while a firm is free to exit the competition once and for all at any moment.

We first characterize the unique symmetric equilibrium of this game, which always has a ``disclose-withhold-exit'' pattern. There are two important time thresholds: a firm will disclose the solution to the first stage if it succeeds (in the first stage) by the first threshold, withhold the solution if it succeeds between the thresholds, and exit if it has not succeeded by the second threshold. This result stands in contrast to the general claim in existing literature (e.g. Choi\cite{Choi}, De Fraja\cite{DeFraja}, Bloch and Markowitz\cite{BM}) that the success of one firm in the intermediate product always benefits the rival as well, through either information update or technology spillover. We show that the invention of the intermediate product will never affect the rival's payoff or incentive in a late period of the competition, as long as the firms are naturally allowed to conceal their success and to exit the competition at will.

The reasoning behind our result is a firm's learning dynamics. When a firm has stayed in the competition long enough but has not received any good news from either itself or the opponent, it had better quit because information update implies that the first stage is likely to be unfeasible. In view of this, if a firm solves the first stage at a time close to the exit point, it will withhold the solution in the hope that the opponent will exit soon enough. Nevertheless, if the solution comes early there is not much benefit in withholding it and waiting, so the firm will just disclose the solution to guarantee the one bird in hand.

Secondly, we turn to the welfare perspective and find that in contrast to the conventional wisdom, a competition is not always desirable under uncertainty. In particular, we show that when the first stage is likely to be solvable and the research cost is high, the ex ante probability of completing the whole research project is higher when only one firm works on it than when two firms compete. The underlying intuition is that during a competition in such a scenario, the fear of lagging in stage 1 will ultimately outweigh the fear of stage 1 being unsolvable, causing the firms to exit prematurely compared to the case of a single firm. As a further implication on policy, if the success of a R\&D project is crucial but the success likelihood is uncertain, the relevant authorities (the Ministry of Defense, for instance) should carefully assess both the distribution of the success likelihood and the associated cost, and may consider assigning the research task to one firm privately rather than tendering it to multiple competing firms.

The remainder of this paper is organized as follows: Section 2 reviews the related literature. Section 3 presents the model. Section 4 states the main results with illustrative examples. Section 5 concludes.

\section{Literature Review}

Early economic models of R\&D races explore the investment decisions that firms make in an effort to reduce production cost\cite{DS1,DS2}, obtain a rewarding technological breakthrough\cite{Reinganum}, or secure intellectual property rights\cite{Loury,LW}. Among these seminal works, Loury\cite{Loury} proposes the framework with exponentially distributed time to innovation success, which has become one of the standard ways to depict a R\&D competition in subsequent research. We adopt this approach in our model with an added structure of uncertainty. While the early works usually emphasize a firm's optimal investment problem in a one-shot game, we focus our attention on the optimal delay or even concealment of intermediate innovation in a two-stage competition.

In the domain of R\&D races with more than one stage, a number of theoretical studies have tried to depict firms' strategic interaction at an intermediate stage. Grossman and Shapiro\cite{GS} and Harris and Vickers\cite{HV} extend Loury's model to two stages and study the competing firms' investment incentives. They find that competition is most intense when firms are close to each other in the race, in the sense that they are most willing to increase research efforts. Later studies such as Bloch and Markowitz\cite{BM} focus on disclosure delay of intermediate results like we do, but base their analysis on complete information in the research process: firms always know exactly their rate of a success in each stage following participation/investment. The general conclusion drawn from this literature is that a success in intermediate stages is beneficial for both the inventor and its opponent through technology spillover that follows disclosure. De Fraja\cite{DeFraja} comes to a similar assertion by modeling spillover as investment that will also benefit the opponent.

Our paper is closest to that of Choi\cite{Choi}, which is the first to raise the issue of success rate uncertainty. Choi's model features a two-stage race where the success rate of the first stage is unknown, and the result argues similarly to the above that one firm's success in the early stage always benefits the opponent by signaling a low difficulty level. This proposition, nevertheless, stems from the assumption that no firm can hide its success, i.e. a firm automatically knows when its opponent has solved the first stage. In contrast, our model assumes that firms are free to choose whether or not to disclose their success, which leads to an entirely different prediction that, under uncertainty, possible technology spillover only occurs when either firm solves the first stage shortly after the competition begins. After a certain time threshold, no firm will disclose its intermediate success and spillover will cease thereafter.

A most recent development in this field is Bobtcheff et al.\cite{BBM}. They also focus on two-player priority races where the solution to a valuable problem is privately observed and every player with the solution need to decide when to disclose her result. The longer a player waits, the larger the value of her solution becomes if she still preempts her opponent. As a result, players in their model behave in the opposite way to ours: they withhold the solution at the beginning, and only disclose it when it is ``mature'', i.e. the value has grown considerably after a sufficiently long time. 

There is also a large literature in law and economics that analyzes partial information disclosure in patent races, but with quite a different framework. In this line of research, a firm's does not strategically disclose information useful to its opponent to secure the immediate benefit as studied in our paper. Instead, possible incentives for information disclosure include signaling strength and commitment to the race\cite{AY,Gill}, inducing exit of risk-averse competitors\cite{BR,LBK,BM2} or establishing prior art as a defensive measure\cite{Parchomovsky}.

\section{Model}

\subsection{The R\&D Competition Game}

Consider two firms, A and B, who compete for the success of a research project on a continuous time line. To complete the project, a firm has to complete two stages of research: an innovative stage to invent a new product (such as a rocket engine for a fighter jet, or a high-performance graphics card for a gaming laptop), and a commercial stage to develop a marketable final product. The success of stage 1 is required for any research on stage 2. 

At $t=0$, they begin developing stage 1 with an i.i.d. Poisson rate of success $\lambda$. $\lambda $ may take one of two values: $H>0$ or $L=0$. In other words, a firm can either solve stage 1 with positive probability, or it can never succeed. The value of $\lambda$ cannot be observed by either firm, but has to be learned over time. For simplicity, we assume that at $t=0$, both firms hold an identical prior $\tilde{\lambda}=\alpha\in (0, 1)$. We will explain the firms' information update process in detail in the next subsection.

Once a firm solves stage 1, it can choose whether to disclose or to withhold its invention. When it discloses the solution, if its opponent has not succeeded yet, the firm claims all credit for the invention and receives a reward of $p_{1}>0$. However, if its opponent has also succeeded (but has chosen to withhold the success), the firms will ultimately receive $\frac{p_{1}}{2}$ each. In many applications of the model, this can be regarded as the scenario where the opponent files a lawsuit for the proprietorship of the invention\footnote{In principle, whether to file the lawsuit and obtain half of the reward is a choice, but here it must be the optimal action to take. Hence, without loss of generality, we assume that this process is automatic.}.

Once the solution of stage 1 is disclosed, technology spillover takes place: the product becomes available to both firms, and they then begin working on stage 2, the second and final product, with an i.i.d. Poisson rate of success $\mu$. We assume that $\mu $ only takes one value and that the value is common knowledge, thereby allowing us to focus on uncertainty in stage 1 without considering insignificant technical details. However, if a firm chooses not to disclose the solution after solving stage 1, it can still work on stage 2 on its own. Whichever firm solves stage 2 first receives a reward of $p_{2}>0$.

Staying in the competition may be costly. We assume that during stage 1, each firm pays a cost per unit time $c>0$, which can be interpreted as the opportunity cost of the resources needed to establish a new department, hire external experts, etc. On the other hand, the cost for the commercial stage is much lower than innovation, because it is part of the firm's regular business and does not require the reallocation of resources. Hence, we assume for convenience that the cost is zero for stage 2. Once stage 2 is solved, the game ends. A firm can choose to exit the competition at any time point before it ends; once a firm exits, it cannot re-enter the competition. Moreover, we assume that exit cannot be observed by the other firm.

Finally, we make the following three assumptions on the set of parameters $\{\alpha, H, \mu, p_1, p_2, c\}$ throughout the paper. We will be more specific about the use of each assumption in the analysis section.

\textbf{A1.} $\alpha H (p_1+p_2)>c$. This assumption guarantees that at least at $t=0$ each firm has a positive rate of net return.

\textbf{A2.} $p_1H>p_2\mu$. This assumption makes it possible for disclosure to occur in equilibrium.

\textbf{A3.} The following inequality is satisfied:
\begin{align*}
&\frac{H(p_{1}+p_{2})[\frac{\frac{1}{2}H-\mu }{H-\mu }(\frac{H(p_1+p_2)}{p_1H-p_2\mu})^{-\frac{H}{H+\mu}}+\frac{\frac{1}{2}H}{H-\mu }(\frac{H(p_1+p_2)}{p_1H-p_2\mu})^{-\frac{\mu}{H+\mu}}]}{c}+\frac{\mu }{H-\mu }(\frac{H(p_1+p_2)}{p_1H-p_2\mu})^{-\frac{H}{H+\mu}}\\
&-\frac{H}{H-\mu }(\frac{H(p_1+p_2)}{p_1H-p_2\mu})^{-\frac{\mu}{H+\mu}}\geq (\frac{H(p_1+p_2)}{p_1H-p_2\mu})^{\frac{H}{H+\mu}}.
\end{align*}
This assumption allows for an equilibrium where disclosure is weakly optimal for each firm at $t=0$ when $\alpha=\frac{1}{2}$, i.e. a firm has equal prior beliefs on the possible values of $\lambda$.

\subsection{Strategy, Information and Equilibrium}

Before solving stage 1, it is clear that a firm is only free to choose when to exit. After finishing the intermediate product, by our assumption above a firm never exits. It then conditions its decision of disclosing/withholding the solution on time elapsed. We assume that a firm uses Bayesian updating whenever possible.

In this paper we focus on symmetric strategies that are piecewise continuous. That is, a \textbf{strategy profile} can be characterized by a number of cutoffs $t_{1},t_{2}...t_{n}$ in the following way: for an arbitrary $i$, each firm's action at time $t$ is the same for every $t\in (t_{i},t_{i+1})$. We call $(t_{i},t_{i+1})$ a "disclose region" if
the firm immediately discloses any incoming success in $(t_{i},t_{i+1})$; we call $(t_{i},t_{i+1})$ a "withhold region" if the firm withholds any
incoming success in $(t_{i},t_{i+1})$; and we call $(t_{i},t_{i+1})$ an exit region if the firm exits.

The key difference made by this model is the existence of both uncertainty and the firms' strategic response to it. Firms do not directly observe $\lambda$, and is free to conceal their success in stage 1 to partly manipulate the opponent's belief. Formally, each firm holds a belief $\tilde{\lambda}$ regarding the difficulty of the project $\tilde{\lambda}=\Pr (\lambda =H)$\footnote{Since we focus on the symmetric case, for notational convenience we choose not to use different labels here for the two firms' beliefs}. A firm updates its belief through (1) its own progress (success or not) in stage 1 and (2) its opponent's disclosed success or silence. It is important to note here that the belief updating process of each firm is affected by its opponent's strategy.

To demonstrate how a firm's belief about $\lambda$ evolves over time, we consider the following simple symmetric case: each firm discloses its success whenever it solve stage 1 by time $t_1$, but withhold its success from $t_1$ to another time point $t_2$, after which it exits the competition if it has not solved stage 1. As we will show in the result section, this is exactly how the firms will behave in the unique equilibrium. 

We write $\tilde{\lambda}$ as $\tilde{\lambda}(t)$, a function of time $t$. The trajectory of $\tilde{\lambda}(t)$ is simple in the disclose region $(0,t')$. Absent a disclosure of success, $\tilde{\lambda}(t)$ evolves as follows:
\begin{equation}
\tilde{\lambda}(t)=\frac{\alpha e^{-2Ht}}{\alpha e^{-2Ht}+1-\alpha} \label{2}.
\end{equation}
On the other hand, whenever either firm succeeds in this region, both firms' beliefs jump to 1 because the successful firm will disclose the solution immediately.

In the withhold region, the updating process is somewhat more complex, as silence does not explicitly suggest whether the opponent has solved stage 1 or not; instead, it implies that the opponent has not solved stage 2, thus implicitly undermining the likelihood of its success in stage 1. When $t\in (t_1,t_2)$, absent a disclosure of success, $\tilde{\lambda}(t)$ evolves as follows:
\begin{equation}
\tilde{\lambda}(t)=\frac{\alpha e^{-2Ht_{1}-H(t-t_{1})}(e^{-H(t-t_{1})}+\int_{0}^{t-t_1}He^{-sH}e^{-(t-t_1-s)\mu }ds)}{%
\alpha e^{-2Ht_{1}-H(t-t_{1})}(e^{-H(t-t_{1})}+\int_{0}^{t-t_1}He^{-sH}e^{-(t-t_1-s)\mu }ds)+1-\alpha}. \label{4}
\end{equation}
Whenever a firm succeeds in this region, only its own belief jumps to 1 since it will keep the good news private.

Intuitively, $\tilde{\lambda}(t)$ should be decreasing as waiting in vain for a success can only indicate that stage 1 is more and more likely to be unfeasible. This is confirmed by the following lemma.

\begin{lem}
$\tilde{\lambda}(t)$ always decreases in $t$.
\end{lem}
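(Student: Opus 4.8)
The plan is to collapse the entire statement to the monotonicity of a single scalar likelihood factor. Both \eqref{2} and \eqref{4} express the posterior in the form $\tilde{\lambda}(t)=\frac{\alpha g(t)}{\alpha g(t)+1-\alpha}$, where $g(t)>0$ is the quantity obtained by stripping the factor $\alpha$ from the numerator (so $g(t)=e^{-2Ht}$ in the disclose region, and $g(t)=e^{-2Ht_1-H(t-t_1)}\bigl(e^{-H(t-t_1)}+\int_0^{t-t_1}He^{-sH}e^{-(t-t_1-s)\mu}\,ds\bigr)$ in the withhold region). Since $x\mapsto\frac{\alpha x}{\alpha x+1-\alpha}$ has derivative $\frac{\alpha(1-\alpha)}{(\alpha x+1-\alpha)^2}>0$, it is strictly increasing; hence $\tilde{\lambda}$ decreases in $t$ if and only if $g$ does. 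So first I would record this reduction and then establish that $g$ is strictly decreasing on each of $(0,t_1)$ and $(t_1,t_2)$, together with continuity at $t_1$.

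The cleanest route is probabilistic. The factor $g(t)$ is exactly the probability, conditional on $\lambda=H$, of the firm's observed history: that the firm has itself not yet solved stage 1 and has seen no disclosure from the opponent up to time $t$ (equivalently, the game is still running). Call this event $E_t$. Conditional on the alternative $\lambda=L=0$ nothing ever happens, so this probability is identically $1$, which is precisely why the constant $1-\alpha$ appears in the denominators. Now the events $\{E_t\}$ are nested and decreasing in $t$: if the firm has had no success and seen no disclosure by $t'$, the same held at every earlier $t<t'$. Hence $g(t)=\Pr(E_t\mid\lambda=H)$ is non-increasing, and strictly decreasing because the relevant Poisson hazards are positive, so the probability strictly drops over any interval. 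This argument is uniform across the two regions and needs no case distinction.

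To keep the lemma self-contained I would also verify it by direct computation. On $(0,t_1)$, $g(t)=e^{-2Ht}$ is manifestly decreasing. On $(t_1,t_2)$, writing $\tau=t-t_1$ and evaluating the integral in \eqref{4} yields the closed form $g(t)=\frac{e^{-2Ht_1}}{H-\mu}\bigl(He^{-(H+\mu)\tau}-\mu e^{-2H\tau}\bigr)$ (with the degenerate case $H=\mu$ recovered as a limit, giving $e^{-2Ht_1}e^{-2H\tau}(1+H\tau)$). Differentiating, the sign of $g'(\tau)$ equals that of $\frac{1}{H-\mu}\bigl(2\mu-(H+\mu)e^{(H-\mu)\tau}\bigr)$ up to a positive factor. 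Finally the two formulas agree at $\tau=0$ (both equal $e^{-2Ht_1}$), so $g$, and therefore $\tilde{\lambda}$, is continuous and decreasing across $t_1$.

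The main obstacle in the direct computation is exactly this last sign check, because the prefactor $\frac{1}{H-\mu}$ changes sign at $H=\mu$ and one must confirm that the bracket $2\mu-(H+\mu)e^{(H-\mu)\tau}$ flips sign in the compensating way: it is negative for all $\tau\ge0$ when $H>\mu$ and positive for all $\tau\ge0$ when $H<\mu$, so the product stays negative in both regimes. Getting this bookkeeping right, and treating $H=\mu$ separately, is the only genuinely delicate part; since the probabilistic argument avoids it entirely, I would lead with that and relegate the explicit calculation to a verification.
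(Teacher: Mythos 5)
Your proof is correct, but your lead argument is genuinely different from the paper's. The paper proves the lemma by brute-force calculus: it evaluates the integral in (\ref{4}) to get $\frac{\mu}{\mu-H}e^{-H(t-t_1)}-\frac{H}{\mu-H}e^{-\mu(t-t_1)}$, differentiates, checks the sign, and treats $\mu=H$ as a separate case. You instead make explicit the reduction to monotonicity of the likelihood factor $g(t)$ (which the paper uses only implicitly) and then observe that $g(t)=\Pr(E_t\mid\lambda=H)$ for the nested, decreasing family of events $E_t$ = ``no own success and no observed disclosure (game still running) by time $t$,'' while $\Pr(E_t\mid\lambda=0)\equiv 1$; monotonicity is then immediate, uniformly over both regions and with no case split at $H=\mu$. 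This is shorter and explains \emph{why} the lemma holds (under $\lambda=H$ silence is genuinely informative, under $\lambda=0$ it is not), though it does require justifying that $g$ is exactly the conditional probability of the observed history in the withhold region --- i.e., that the opponent's silence there means ``has not solved stage 2,'' which the text around (\ref{4}) asserts rather than derives; your identification of that event is correct. Your appended computational verification is essentially the paper's proof with the derivative organized differently, and your sign bookkeeping checks out: $g'(\tau)$ has the sign of $\frac{1}{H-\mu}\bigl(2\mu-(H+\mu)e^{(H-\mu)\tau}\bigr)$, the bracket is negative for $H>\mu$ (since $(H+\mu)e^{(H-\mu)\tau}\geq H+\mu>2\mu$) and positive for $H<\mu$ (since $e^{(H-\mu)\tau}\leq 1<\frac{2\mu}{H+\mu}$), so the product is negative in both regimes, and the $H=\mu$ limit $e^{-2Ht_1}e^{-2H\tau}(1+H\tau)$ is likewise decreasing.
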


\begin{proof}
For (\ref{2}), observe that $e^{-2Ht}$ decreases in $t$; thus, $\tilde{\lambda}(t)$ decreases in $t$.

For (\ref{4}), when $\mu\neq H$ we have
\begin{align*}
e^{-H(t-t_{1})}+\int_{0}^{t-t_1}He^{-sH}e^{-(t-t_1-s)\mu }ds&=e^{-H(t-t_1)}+\frac{H}{\mu-H}(e^{-(t-t_1)H}-e^{-(t-t_1)\mu})\\
&=\frac{\mu}{\mu-H}e^{-H(t-t_1)}-\frac{H}{\mu-H}e^{-(t-t_1)\mu}.
\end{align*}
The derivative of the right-hand side with respect to $t$ is equal to
\begin{align*}
-\frac{\mu H}{\mu-H}(e^{-H(t-t_1)}-e^{-\mu(t-t_1)}),
\end{align*}
which is always negative. Hence, we can conclude that $e^{-2Ht_{1}-H(t-t_{1})}(e^{-H(t-t_{1})}+\int_{0}^{t-t_1}He^{-sH}e^{-(t-t_1-s)\mu }ds)$ decreases in $t$ and thus $\tilde{\lambda}(t)$ decreases in $t$.

When $\mu=H$, we have
\begin{align*}
e^{-H(t-t_{1})}+\int_{0}^{t-t_1}He^{-sH}e^{-(t-t_1-s)\mu }ds&=e^{-H(t-t_{1})}+H(t-t_1)e^{-H(t-t_{1})},
\end{align*}
which again decreases in $t$. Therefore, $\tilde{\lambda}(t)$ decreases in $t$.
\end{proof}

It is simple yet useful to understand the reasoning behind Lemma 1. In the disclose region, absent any disclosure, a firm's belief update is twofold. First, its unsuccessful research decreases its estimation of $\lambda $; second, the implied public information that its opponent has also not succeeded again decreases its estimation of $\lambda $. In the withhold region, although the firm no longer observes its opponent's progress, it knows at least that the opponent has not completed stage 2 (otherwise, the opponent will disclose everything and win the game) and thus is also less likely to have completed stage 1. All this information diminishes the likelihood that stage 1 can ever be solved, i.e., $\tilde{\lambda}(t)$ always decreases in $t$.

At the end of this section, we define the equilibrium of this game. Following the conventional characterization, an \textbf{equilibrium} is a (symmetric, piecewise continuous) strategy profile such that for each firm, at any time $t\geq 0$, following the action specified in the strategy profile maximizes the firm's expected continuation payoff given the strategy of its opponent.

\section{Results}

\subsection{Characterization of Unique Equilibrium}

Our first main result is the explicit characterization of the unique equilibrium. Noting the nature of sequential optimality in an equilibrium, our approach is to work backwards, i.e. we determine a firm's optimal action in the following order: (1) when it has not solved stage 1 after spending a long time in the competition, (2) when it solves stage 1 only at a late time, and (3) when it solves stage 1 early. We begin with the following lemma.

\begin{lem}
In every equilibrium, there exists $\bar{t}$ such that a firm exits when $t\in \lbrack \bar{t},\infty ]$.
\end{lem}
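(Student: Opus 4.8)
The plan is to show that a firm's expected flow of net return from staying in the competition, when it has not yet solved stage 1, eventually becomes strictly negative and stays so, forcing exit after some finite time $\bar{t}$. The key object is the instantaneous rate of net return to a firm that is still working on stage 1 at time $t$. The cost side is the constant flow $c$ per unit time. On the benefit side, conditional on $\lambda=H$ the firm succeeds at Poisson rate $H$ and captures some value, but the firm only assigns probability $\tilde{\lambda}(t)$ to $\lambda=H$. Since the maximum reward a firm can obtain from its own success is bounded above by $p_1+p_2$ (it cannot do better than capturing the full value of both stages), the instantaneous expected gross return from staying is at most $\tilde{\lambda}(t)\,H\,(p_1+p_2)$. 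By Lemma~1, $\tilde{\lambda}(t)$ is strictly decreasing in $t$, and one checks from either~(\ref{2}) or~(\ref{4}) that $\tilde{\lambda}(t)\to 0$ as $t\to\infty$ (absent any disclosure, the posterior that stage 1 is feasible vanishes). Hence there exists a finite $\bar{t}$ beyond which $\tilde{\lambda}(t)\,H\,(p_1+p_2)<c$.

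First I would make precise the claim that the best possible continuation payoff from staying, for a firm that has not solved stage 1, is bounded above by the discounted expected value of its own future success minus accumulated cost, and that this upper bound is dominated for large $t$. The cleanest route is to argue that, no matter what the opponent does and no matter what disclosure policy the firm later adopts, the firm's total reward from the project cannot exceed $p_1+p_2$; therefore its continuation value at time $t$ conditional on not having succeeded is at most $\int_t^\infty \tilde{\lambda}(s)\,H\,(p_1+p_2)\,e^{-\int_t^s \tilde{\lambda}(u)H\,du}\,ds - \int_t^\infty (\text{cost terms})$, and the gross term is itself bounded. The decisive comparison, however, is the flow comparison: once $\tilde{\lambda}(t)H(p_1+p_2)<c$ for all $t\ge \bar t$, staying one more instant costs $c\,dt$ while the expected immediate gain is strictly less, and since beliefs only continue to fall this gap never closes. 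Exiting yields a continuation payoff of exactly $0$, which therefore strictly dominates staying on any interval past $\bar t$.

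The main obstacle is to rule out that a firm might rationally stay despite a negative current flow, in anticipation of a favorable future event — specifically, the opponent's disclosure, which would move the firm into stage 2 where the cost is zero. I would dispose of this by observing that a disclosure by the opponent can only help the firm if the firm has not itself already secured stage 1, and more importantly that waiting for the opponent's disclosure is not free: the firm pays $c$ per unit time meanwhile, and the event that benefits it (the opponent disclosing) also transfers $p_1$ to the opponent, so the firm's residual prize from such an event is bounded by $p_2$. Since $\tilde{\lambda}(t)\to 0$, the probability that stage 1 is even feasible — a prerequisite for any firm to ever disclose — vanishes, so the option value of waiting for spillover also vanishes and cannot offset the strictly positive flow cost. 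Thus the upper bound on the continuation value from staying becomes negative for $t$ large, while exit guarantees $0$.

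Finally, I would assemble these pieces: choose $\bar t$ large enough that both the own-success flow term and the spillover option value are individually dominated by the cost flow on $[\bar t,\infty)$, which is possible because both are continuous and tend to $0$ by Lemma~1 and the limiting behavior of $\tilde{\lambda}(t)$, while $c>0$ is fixed. On $[\bar t,\infty]$ the continuation value of staying is therefore strictly below the value of exiting, so exit is the unique optimal action, establishing the existence of the exit threshold. I expect the belief-limit step and the careful bounding of the spillover option value to require the most care; the flow-cost comparison itself is straightforward once those bounds are in hand.
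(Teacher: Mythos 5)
Your proposal is correct and follows essentially the same route as the paper: bound the instantaneous benefit of staying by $\tilde{\lambda}(t)H(p_1+p_2)\,dt$, observe that $\tilde{\lambda}(t)\to 0$ as $t\to\infty$ absent success, and conclude that the fixed flow cost $c>0$ must eventually dominate, forcing exit. The only difference is that you explicitly bound the option value of waiting for the opponent's disclosure, a point the paper's one-line argument subsumes implicitly in the same $\tilde{\lambda}(t)H(p_1+p_2)$ bound; this is a welcome extra degree of care but not a different method.
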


\begin{proof}
As $t\rightarrow \infty $, absent success, $\tilde{\lambda}(t)<\frac{\alpha e^{-Ht}}{\alpha e^{-Ht}+1-\alpha}\rightarrow 0$. Hence as $\tilde{\lambda}(t)>0$, we have $\tilde{\lambda}(t)\rightarrow 0$. Then, we compare the firm's incentives for staying and exiting. The incentive to stay an additional time increment $dt$ is less than $\tilde{\lambda}(t)H(p_{1}+p_{2})dt$, which approaches $o(dt)$ as $t\rightarrow \infty $. By our assumption of a fixed $c>0$, an exit point always exists.
\end{proof}

The intuition for the existence of an exit region in every equilibrium is that no firm is willing to stay indefinitely in the competition when staying is costly and stage 1 may just be unsolvable. As time goes by with no good news arriving, a firm believes that $\lambda=0$ is more and more likely and will ultimately decides to exit. Recall that every strategy profile, and hence every candidate for an equilibrium, can be considered here can be characterized by a number of cutoffs $t_{1},t_{2}...t_{n}$. Therefore $(t_{n},\infty )$ must be the only exit region in every equilibrium.

Next, we go one step backwards to show that withholding the solution must be optimal when time gets close to the exit point.

\begin{lem}
In every equilibrium, each firm must be withholding the solution in $(t_{n-1},t_n)$.
\end{lem}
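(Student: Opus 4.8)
The plan is to rule out, by contradiction, the only alternative to withholding that is consistent with $(t_{n-1},t_n)$ being a non-exit region. First I would record that since exit is absorbing and, by Lemma 2, the unique exit region is the tail $(t_n,\infty)$, the region $(t_{n-1},t_n)$ cannot itself be an exit region; hence in the candidate symmetric equilibrium each firm either discloses or withholds any stage-1 success arriving in $(t_{n-1},t_n)$. It therefore suffices to show that ``disclose'' cannot be the prescribed action, and for this I only need to exhibit a single instant in $(t_{n-1},t_n)$ at which disclosing fails to maximize the continuation payoff, since the action is constant across the region.

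Suppose for contradiction that $(t_{n-1},t_n)$ is a disclose region, and consider a firm that solves stage 1 at a time $t\in(t_{n-1},t_n)$ with $t$ close to $t_n$. I would compute the two continuation values. If the firm discloses, then conditional on the relevant history no disclosure has been observed, so in a disclose region the opponent has not yet solved stage 1; the firm collects the full $p_1$, spillover occurs, and both firms then work on stage 2 at rate $\mu$ with no further exit, yielding the firm $\tfrac{p_2}{2}$ by symmetry. Hence the disclose payoff equals $p_1+\tfrac{p_2}{2}$, independent of $t$. If instead the firm withholds, then having solved stage 1 it knows $\lambda=H$, so over the interval $(t,t_n)$ the opponent pursues stage 1 at the objective hazard $H$ while the firm pursues stage 2 at hazard $\mu$; with probability $e^{-(H+\mu)(t_n-t)}$ neither solves, the opponent reaches its exit point $t_n$ without a solution, and the firm is left a monopolist who discloses stage 1 for the full $p_1$ and solves stage 2 alone for the full $p_2$. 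Since the remaining events contribute nonnegatively, the withholding value is at least $e^{-(H+\mu)(t_n-t)}(p_1+p_2)$.

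The key observation is then the behavior of this lower bound as $t\to t_n^{-}$: the monopoly probability $e^{-(H+\mu)(t_n-t)}$ tends to $1$, so the withholding value converges to at least $p_1+p_2$, which strictly exceeds the constant disclose payoff $p_1+\tfrac{p_2}{2}$. By continuity the strict inequality persists on a left-neighborhood of $t_n$, so at any such $t$ the firm strictly prefers to deviate from disclosing to withholding, contradicting the sequential optimality in the definition of equilibrium. Therefore $(t_{n-1},t_n)$ must be a withhold region.

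The main obstacle I anticipate is the bookkeeping in the withholding continuation, where one must cleanly separate the firm's private certainty that $\lambda=H$, which governs the opponent's true stage-1 hazard $H$, from the opponent's own decreasing belief $\tilde{\lambda}(t)$ that pins down $t_n$, and must verify that the omitted events—an interim opposing disclosure that shares $p_1$ and triggers spillover, or an early stage-2 win by the firm—indeed contribute nonnegatively and at order $O(t_n-t)$. Restricting attention to the monopoly-event lower bound sidesteps evaluating these terms and isolates the transparent economic comparison: disclosure's spillover hands the about-to-exit opponent a coin-flip at $p_2$ worth $\tfrac{p_2}{2}$, whereas withholding lets that opponent exit and preserves the firm's full monopoly over stage 2.
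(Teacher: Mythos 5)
Your overall strategy is the same as the paper's: suppose $(t_{n-1},t_n)$ is a disclose region, compare the two continuation values at an instant just before $t_n$, and conclude that withholding wins because the opponent is on the verge of exiting. There is, however, a concrete gap in both of your payoff computations: you assert that, no disclosure having been observed, ``in a disclose region the opponent has not yet solved stage 1.'' That inference is not justified at this point of the backward induction. Lemma 3 is established before Lemma 4, so nothing yet rules out earlier withhold regions among $(t_1,t_2),\dots,(t_{n-2},t_{n-1})$; the opponent may have solved stage 1 in one of those and kept silent, an event whose probability $p$ can be strictly positive in a candidate equilibrium. On that event your disclose payoff is not $p_1+\frac{p_2}{2}$ but $\frac{p_1}{2}+\frac{p_2}{2}$, and your withholding lower bound fails as well: an opponent who already holds the stage-1 solution does not exit at $t_n$, so the ``monopoly'' event has probability $(1-p)e^{-(H+\mu)(t_n-t)}$ rather than $e^{-(H+\mu)(t_n-t)}$, and the claimed limit $p_1+p_2$ of the withholding value is not attained when $p>0$.

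The gap is fixable and the conclusion survives. Conditional on the opponent having already solved stage 1 and withheld, disclosing and withholding are payoff-equivalent for the firm (either way it expects $\frac{p_1}{2}+\frac{p_2}{2}$), so the comparison is decided entirely on the complementary event, of probability $1-p>0$, where your monopoly argument applies and yields a strict advantage for withholding of order $\frac{p_2}{2}(1-p)$ as $t\to t_n^-$. This conditioning is exactly what the paper's proof does: it carries the probability $p$ that the opponent has already succeeded, writing the disclose payoff as $p_1(1-p)+\frac{1}{2}p_1p+\frac{1}{2}p_2$ and the withhold payoff as $(p_1+p_2)(1-p-o_1(t))+\frac{1}{2}(p_1+p_2)(p+o_2(t))$. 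You should add that conditioning on the opponent's unobserved prior success rather than assuming it away; with that correction your argument coincides with the paper's.
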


\begin{proof}
Suppose, alternatively, that in some equilibrium both firms choose to disclose the solution at $t_{n}-dt$. Consider either of the firms, say firm A, and suppose that the probability that the firm's opponent has succeeded in stage 1 before this time instant is $p$. The expected payoff from disclosure is $p_{1}(1-p)+\frac{1}{2}p_{1}p+\frac{1}{2}p_{2}$, while the expected payoff from withholding the solution is $(p_{1}+p_{2})(1-p-o_1(t))+\frac{1}{2}(p_{1}+p_{2})(p+o_2(t))$. In the second expression, $o_1(t)$ and $o_2(t)$ represent firm A's and firm B's (infinitesimal) probabilities of solving stage 1 during $(t_n-dt,t_n)$ respectively, calculated from firm A's belief, with $o_1(t)=o_2(t)$ due to the assumed symmetry of strategies. It is clear that the value of the second expression is larger than that of the first.
\end{proof}

When a firm solves stage 1 just before the exit point, it compares its expected payoff from two available actions. If the opponent has solved stage 1 by now, disclosing or withholding the solution does not make any difference. However, if the opponent has been unsuccessful so far, disclosure would have ``saved'' the opponent from the edge of exiting and brought the competition to stage 2, while withholding the solution will probably force the opponent to quit because it is unlikely to solve stage 1 in the last brief period. Therefore, withholding the solution is no doubt the wise choice. Furthermore, we show below that no firm will choose to withhold the solution in two different time regions.

\begin{lem}
In any equilibrium, there exists one and only one withhold region.
\end{lem}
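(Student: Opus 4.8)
The plan is to observe first that \emph{existence} of a withhold region is already in hand: the preceding lemma shows that $(t_{n-1},t_n)$ is a withhold region. Hence the whole content of the statement is \emph{uniqueness}, which I would establish as a single-crossing property of the disclose-versus-withhold decision. The crucial structural fact is that once a firm has itself solved stage 1 it knows with certainty that $\lambda=H$, so its belief about the feasibility of stage 1 plays no further role; the only residual uncertainty is whether and when the opponent solves stage 1. I would therefore fix a firm that solves stage 1 at an arbitrary time $t\in(0,t_n)$ and compare its payoff from immediate disclosure with its payoff from withholding, splitting the analysis according to whether the opponent has already solved stage 1 by time $t$.

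I would first dispose of the case in which the opponent has already solved stage 1 and withheld: here disclosure yields $\frac{p_1}{2}+\frac{p_2}{2}$ (the reward $p_1$ is split, and by memorylessness of the stage-2 Poisson race conditional on the game still being ongoing the continuation is symmetric), and withholding yields exactly the same. The firm is thus indifferent, so the sign of the net benefit of withholding is governed entirely by the complementary event. Accordingly I would define $\delta(t)$ to be the net benefit of withholding conditional on the opponent \emph{not} yet having solved stage 1, and note that the sign of the unconditional net benefit equals the sign of $\delta(t)$.

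The key step, and the one I expect to be the main obstacle, is to show that $\delta(t)$ has a closed form that does not depend on the opponent's disclosure behavior on $(t,t_n)$. The point is that the firm's payoff departs from the monopoly value $p_1+p_2$ only in the event that the opponent solves stage 1 at some $r\in(t,t_n)$ before the firm completes stage 2; and in that event, whether the opponent discloses or withholds, both firms end up holding the stage-1 solution, so $p_1$ is ultimately split and the remaining stage-2 race is symmetric, giving the firm $\frac{p_1+p_2}{2}$ either way. Using that the firm knows $\lambda=H$ and that the opponent exits at $t_n$ (the unique exit region established above), the probability of this event is $P_2(t)=\frac{H}{H+\mu}\big(1-e^{-(H+\mu)(t_n-t)}\big)$, whence $\delta(t)=\frac{p_2}{2}-\frac{p_1+p_2}{2}P_2(t)$. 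Verifying this strategy-independence carefully, by enumerating the continuation payoffs in each sub-case and invoking the memorylessness of the stage-2 rate $\mu$, is the delicate part of the argument.

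Finally, since the remaining window $t_n-t$ for the opponent to catch up shrinks as $t$ grows, $P_2(t)$ is strictly decreasing and hence $\delta(t)$ is strictly increasing on $(0,t_n)$; note also that $P_2(t)\to 0$ as $t\to t_n^-$, so $\delta$ is strictly positive near $t_n$, consistent with the preceding lemma. Strict monotonicity forces $\delta$ to change sign at most once, from negative to positive, so the set of times at which withholding is weakly optimal is a single interval terminating at $t_n$. This rules out two withhold regions separated by a disclose region, and together with the existence already noted yields exactly one withhold region.
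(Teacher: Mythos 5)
Your construction of $\delta(t)$ is correct as far as it goes---the strategy-independence of the continuation once the opponent also holds the stage-1 solution, the formula $\delta(t)=\frac{p_2}{2}-\frac{p_1+p_2}{2}P_2(t)$ with $P_2(t)=\frac{H}{H+\mu}\bigl(1-e^{-(H+\mu)(t_n-t)}\bigr)$, and the monotonicity of $\delta$ all check out. But there is a genuine gap in the step where you equate ``withholding is optimal at $t$'' with ``$\delta(t)\geq 0$.'' Your $\delta(t)$ compares \emph{disclose now} against \emph{withhold until the opponent exits at $t_n$}. A hypothetical equilibrium with two withhold regions prescribes a third continuation that your dichotomy never examines: withhold for a while and then disclose at the start of the next disclose region, i.e.\ at some $t'<t_n$ while the opponent is still active. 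If such an intermediate stopping time could beat immediate disclosure, withholding could be optimal at a time where $\delta(t)<0$, and your single-crossing argument would only show that the withhold set \emph{contains} an interval ending at $t_n$, not that it \emph{is} one. The tell is that your proof never invokes Assumption A2 ($p_1H>p_2\mu$), whereas this is exactly the lemma where the paper needs it.

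The gap is fixable, and the fix is essentially the paper's own proof. The paper supposes a first withhold region $(t_1,t_2)$ followed by a disclose region, computes the value at $t_1$ of ``withhold until $t_2$, then disclose'' against the constant $p_1+\frac{p_2}{2}$ from disclosing now, and finds that weak preference for the former forces $p_1\frac{H}{2(H+\mu)}\leq p_2\frac{\mu}{2(H+\mu)}$, contradicting A2. Equivalently, in your framework: the marginal value of postponing disclosure from $t'$ to $t'+dt'$ while the opponent is still in the game is $e^{-(H+\mu)(t'-t)}\bigl(\mu\frac{p_2}{2}-H\frac{p_1}{2}\bigr)dt'$, which is strictly negative under A2, so the optimal disclosure time is bang-bang (now, or only after the opponent exits) and your binary comparison becomes legitimate. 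With that lemma inserted, your argument goes through and is in fact a somewhat more global route than the paper's local contradiction at $t_1$; without it, the central case---the one the lemma exists to exclude---is assumed away.
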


\begin{proof}
By Lemma 3, there must be at least one withhold region. Suppose that there is a series of withhold regions, $(t_{1},t_{2})$, $(t_{3},t_{4})$, ...,$(t_{2m-1},t_{2m})$, $m\in\mathbb{N}^+$. Without loss of generality, we analyze the firm's incentives at $t_{1}$. 

The payoff from disclosure is $p_{1}+\frac{p_{2}}{2}$, which is a constant. The expected payoff from withholding the solution from $t_{1}$
to $t_{2}$ is
\begin{eqnarray*}
&&e^{-H(t_{2}-t_{1})}e^{-\mu (t_{2}-t_{1})}(p_{1}+\frac{p_{2}}{2}%
)+\int_{t_{1}}^{t_{2}}\mu e^{-\mu (s-t_{1})}e^{-H(s-t_{1})}ds
(p_{1}+p_{2}) \nonumber \\
&&+\int_{t_{1}}^{t_{2}}He^{-H(s-t_{1})}e^{-\mu (s-t_{1})}ds 
\frac{(p_{1}+p_{2})}{2} \nonumber \\
&=&[-\frac{\mu }{H+\mu }e^{-H(s-t_{1})-\mu
(s-t_{1})}|_{t_{1}}^{t_{2}}](p_{1}+p_{2})+e^{-H(t_{2}-t_{1})-\mu
(t_{2}-t_{1})}(p_{1}+\frac{p_{2}}{2}) \nonumber \\
&&+[-\frac{H}{H+\mu }e^{-H(s-t_{1})-\mu (s-t_{1})}|_{t_{1}}^{t_{2}}]\frac{%
(p_{1}+p_{2})}{2} \nonumber \\
&=&p_{1}[\frac{\frac{1}{2}H}{H+\mu }e^{-(H+\mu )(t_{2}-t_{1})}+\frac{\frac{1%
}{2}H+\mu }{H+\mu }]+p_{2}[\frac{-\frac{1}{2}\mu }{H+\mu }e^{-(H+\mu
)(t_{2}-t_{1})}+\frac{\frac{1}{2}H+\mu }{H+\mu }].
\end{eqnarray*}

At cutoff $t_{1}$, the firm should be at least indifferent between disclosing and withholding the solution:
\begin{eqnarray*}
p_{1}\frac{\frac{1}{2}H}{H+\mu }[e^{-(H+\mu )(t_{2}-t_{1})}-1]+p_{2}\frac{%
\frac{1}{2}\mu }{H+\mu }[1-e^{-(H+\mu )(t_{2}-t_{1})}] &\geq&0 \\
p_{1}\frac{\frac{1}{2}H}{H+\mu } &\leq&p_{2}\frac{\frac{1}{2}\mu }{H+\mu },
\end{eqnarray*}%
which contradicts our assumption $p_{1}H>p_{2}\mu $.
\end{proof}

Our approach for the proof here is to assume multiple withhold regions in an equilibrium and evaluate the incentives for different actions at the instant of entering the first such region. We show that the condition needed for starting the withhold region now contradicts the very assumption that guarantees the existence of a disclose region. Therefore, the withhold region in every equilibrium is unique and withholding the solution must be followed by exit.

Now we are ready to present our main theorem.

\begin{thm}
The R\&D competition game has a unique equilibrium. The equilibrium is a disclose-withhold-exit strategy profile characterized by cutoffs $t_{1}$ and $t_{2}$ such that each firm
\begin{eqnarray*}
\text{discloses the solution of stage 1 if }t &\in &[0,t_{1}] \nonumber \\
\text{withholds the solution of stage 1 if }t &\in &[t_{1},t_{2}] \nonumber \\
\text{exits if }t &\in &[t_{2},\infty ].
\end{eqnarray*}
\end{thm}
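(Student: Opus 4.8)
The plan is to assemble the qualitative skeleton from Lemmas~2--4 and then pin down the two cutoffs by boundary optimality conditions. First I would argue that the region structure of any equilibrium must be exactly disclose--withhold--exit. Lemma~2 gives a terminal region $(t_{n},\infty)$ in which the firm exits; since exit is irreversible, this is the only exit region, so every region before it is a disclose or withhold region in which the firm is still active. Lemma~3 forces the region immediately preceding exit to be a withhold region, and Lemma~4 says there is exactly one withhold region. Consequently every earlier region is a disclose region, and by piecewise continuity these merge into a single disclose region $[0,t_{1}]$. Writing $t_{1}=t_{n-1}$ and $t_{2}=t_{n}$ yields the claimed form, so it remains to determine $t_{1},t_{2}$ and to prove existence and uniqueness.

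Second, I would derive the two conditions that pin down the cutoffs. At $t_{1}$ a firm that has just solved stage~1 (and hence knows $\lambda=H$, with the opponent having no solution, since no disclosure has occurred in the disclose region) must be indifferent between disclosing, worth $p_{1}+\tfrac{p_{2}}{2}$, and withholding until $t_{2}$. The key point is that because the withhold region is followed by exit, a withholding firm that survives to $t_{2}$ sees the opponent leave and then wins stage~2 alone, so its terminal value is $p_{1}+p_{2}$ rather than the $p_{1}+\tfrac{p_{2}}{2}$ of a withhold region abutting a disclose region; this is exactly what makes withholding worthwhile here but nowhere else. Computing the withholding value as a function of the length $\tau=t_{2}-t_{1}$ gives an expression equal to $p_{1}+p_{2}$ at $\tau=0$ and decreasing monotonically to a limit strictly below $p_{1}+\tfrac{p_{2}}{2}$ precisely when $p_{1}H>p_{2}\mu$ (Assumption~A2); hence the indifference has a unique interior root $\tau^{\ast}>0$, fixing the length of the withhold region. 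The exit cutoff $t_{2}$ is then determined by the stay-versus-exit indifference of a firm that has \emph{not} solved stage~1: its marginal gain from staying, governed by the belief $\tilde{\lambda}(t_{2})$ from~\eqref{4} and the continuation value of a late stage-1 success, must equal the marginal cost $c$.

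Third, I would solve the resulting coupled system. Since $\tilde{\lambda}(t_{2})$ in~\eqref{4} depends on both $t_{1}$ and $t_{2}-t_{1}$, I substitute $t_{2}-t_{1}=\tau^{\ast}$ so that the exit condition becomes a single equation in $t_{1}$. Because $\tilde{\lambda}(t_{2})$ is strictly decreasing in $t_{1}$ (the belief falls as more unproductive time elapses, cf.\ Lemma~1), this equation has at most one root; Assumption~A1 guarantees the firm is active at $t=0$ so that a strictly positive active horizon exists, and Assumption~A3 guarantees disclosure is weakly optimal at $t=0$, which is what keeps $t_{1}\ge 0$ so the disclose region does not collapse. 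Together these give a unique pair $(t_{1},t_{2})$ with $0\le t_{1}<t_{2}<\infty$.

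Finally, I would verify that this candidate is a genuine equilibrium by upgrading the boundary indifferences to region-wide optimality. In the disclose region the comparison is clean: disclosing is worth the constant $p_{1}+\tfrac{p_{2}}{2}$, while withholding from any $t<t_{1}$ has length $t_{2}-t>\tau^{\ast}$ and hence, by the monotonicity of the withholding value, is strictly worse, so disclosure is strictly optimal on $[0,t_{1})$. Monotonicity of $\tilde{\lambda}$ (Lemma~1) likewise turns the single exit indifference into strict optimality of staying before $t_{2}$ and exiting after. I expect the main obstacle to be the withhold region itself: there a firm holding the solution must compare disclosing now against withholding, and both values depend on the conditional probability that the opponent already holds stage~1 (which the firm infers from $\lambda=H$ together with the absence of any stage-2 disclosure). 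Showing that this comparison favors withholding throughout $(t_{1},t_{2})$ --- rather than only at the endpoints --- is the delicate step, and it is where the belief dynamics of~\eqref{4} and the monotonicity supplied by Lemma~1 must be used carefully.
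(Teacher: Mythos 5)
Your skeleton matches the paper's: Lemmas 2--4 give the disclose--withhold--exit structure, the indifference at $t_{1}$ pins down $t_{2}-t_{1}=\frac{1}{H+\mu}\ln\frac{H(p_{1}+p_{2})}{p_{1}H-p_{2}\mu}$ exactly as you describe (with A2 guaranteeing an interior root), and the stay-versus-exit condition then pins down $t_{2}$ uniquely. Your key observation --- that the terminal value of withholding is $p_{1}+p_{2}$ because the opponent exits at $t_{2}$, whereas a withhold region abutting a later active region would only yield $p_{1}+\frac{p_{2}}{2}$ --- is precisely the engine of the paper's Part I. The problem is that the verification half of the argument, which is where most of the paper's appendix lives, is left with genuine gaps.

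First, you explicitly defer the optimality of withholding throughout the interior of $(t_{1},t_{2})$, calling it ``the delicate step.'' This is a real piece of the proof, not a detail: the paper computes the difference between the withholding and disclosing values at each $t\in(t_{1},t_{2})$, conditional on the inferred distribution over the opponent's state, and reduces it to a ratio whose numerator $-\frac{H/2}{H+\mu}p_{1}+\frac{\mu/2}{H+\mu}p_{2}+(p_{1}+p_{2})\frac{H/2}{H+\mu}e^{-(H+\mu)(t_{2}-t)}$ is increasing in $t$ and vanishes at $t=t_{1}$ by construction; notably this step does not use Lemma 1 or the belief dynamics of \eqref{4} at all, contrary to your guess about where the difficulty lies. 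Second, your claim that monotonicity of $\tilde{\lambda}$ ``turns the single exit indifference into strict optimality of staying before $t_{2}$'' is too quick on $(t_{1},t_{2})$: the instantaneous value of staying there is not $\tilde{\lambda}(t)$ times a constant but the full expression \eqref{5}, and establishing that it is decreasing requires separately showing that its numerator decreases and its denominator increases; the latter reduces to $\frac{1-\alpha}{\alpha}e^{2Ht_{1}}\geq 1$, which is where Assumption A3 actually does its work (not, as you suggest, merely to keep $t_{1}\geq 0$, though it does that too). Third, in the disclose region you only rule out the deviation ``withhold until exit''; you must also rule out ``withhold for a while, then disclose,'' which the paper treats as a separate deviation type and dismisses via the Lemma 4 computation. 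Your withhold-until-exit argument itself is sound --- indeed slightly cleaner than the paper's two-step argument through the continuation value $U$ at $t_{1}$, since the firm's payoff upon the opponent's stage-1 success is $\frac{p_{1}+p_{2}}{2}$ whether the opponent discloses or withholds, so the deviation value really is the Part I function evaluated at $t_{2}-t>\tau^{\ast}$ --- but it does not cover the omitted deviation class.
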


\begin{proof}
See Appendix A.
\end{proof}

This result highlights our distinction with previous literature, that the positive externalities from technology spillover may never realize because firms change their behavior according to time and belief. Whenever a firm solves stage 1, it immediately realizes that stage 1 is solvable, but its decision of whether to disclose the solution depends on when this jump of belief takes place. If it succeeds early in the competition, there is still a long time before the opponent quits; given that stage 1 is solvable, the opponent may just also solve stage 1 before the exit point. Hence the rational choice to make is to disclose the solution at once so that $p_1$ is secured. Here technology spillover will occur and the overall progress of R\&D is indeed accelerated (as compared to, for instance, the case where the two firms work independently on the project without any technology spillover).

However, when no firm has solved stage 1 before the time threshold $t_1$, the incentives for disclosing and withholding the solution change drastically. Now that not much time remains before the exit point, it becomes unlikely that the opponent can succeed in stage 1 if it has not already. Hence, the benefit of disclosure has not increased -- in fact, it has decreased since the opponent may have solved stage 1 some time between $t_1$ and now -- but there is a larger cost now because disclosure could save the opponent from quitting and simply cut the expected payoff from the next stage by half. As a result, both firms would rather conceal any success and wait for the opponent to quit. Technology spillover will never happen in this time region and a competition is no different from pursuing R\&D independently.

\subsection{Welfare Analysis}

In this section we analyze the effect of uncertainty on social welfare. In particular, we are interested in whether the conventional wisdom that competition always facilitates efficiency still holds here: under uncertainty, is a two-firm competition always socially preferred to a one-firm monopoly? To provide a tractable answer, we assume that the final outcome is indispensable to the society, meaning that, from a social welfare perspective, a success is always desirable, while the associated cost is relatively trivial. Thus, social welfare can be measured by the probability that the game ends in solving stage 1 (by at least one firm) rather than in all firms exiting. This notion is consistent with the social welfare analysis in a wide range of literature in economics and law which studies innovation and patent races (see for example \cite{BR}\cite{Parchomovsky}).

The above probability has a one-to-one relationship with the total length of
time elapsed per firm before all firms exit. In the case of multiple firms,
this time length is the sum of individual time lengths. Hence, our approach
here is to calculate it in an R\&D process without competition (a single
firm), and one with competition (two firms) respectively.

When there is only one firm, information is perfect: the firm always knows
whether or not it has solved stage 1. With two competing firms, however,
information is imperfect in the sense that no firm knows whether its
opponent has solved stage 1 once they enter the withhold region. In general,
the imperfect information has two effects:

\textbf{Effect 1:} A firm's expected payoff from solving stage 1 decreases. For a
single firm, a success in stage 1 brings the reward in whole; however, for two competing firms
in the withhold region, a firm is faced with the fear that the competitor has already
solved the first stage, meaning that a success in stage 1 will only yield $\frac{p_{1}+p_{2}%
}{2}$. From this aspect, a firm is less willing to stay in the competition.

\textbf{Effect 2:} A firm's belief about stage 1 being solvable decreases
slower. For a single firm, all the unfruitful effort reduces it's belief $%
\tilde{\lambda}(t)$. However, for two competing firms in the withhold region, a
firm is uncertain about the result of the competitor's effort., making it
less pessimistic about the difficulty of stage 1. From this aspect, a firm
is more willing to stay in the competition.

When the first effect is larger than (equal to, smaller than) the second,
competition shortens (does not change, extends) the total time spent on R\&D without solving stage 1 and hence decreases (does not change, increases) social welfare. The
formal statement is as follows.

\begin{prop}
From the welfare perspective,%
\[
\left\{ 
\begin{array}{c}
\text{the society prefers a one-firm monopoly if }\frac{H(p_{1}+p_{2})}{2}%
<c; \\ 
\text{the society is indifferent if }\frac{H(p_{1}+p_{2})}{2}=c; \\ 
\text{the society prefers a two-firm competition if }\frac{H(p_{1}+p_{2})}{2}>c.%
\end{array}%
\right. 
\]%
\end{prop}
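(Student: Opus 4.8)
The plan is to reduce the welfare comparison to a comparison of the total stage-1 search time in the two regimes, and then to show that the sign of the difference is governed entirely by whether the monopolist's exit belief lies above or below $\frac{1}{2}$. Since the project is completed whenever stage 1 is solved (stage 2 carries no cost and is eventually solved at rate $\mu$), welfare equals the probability that stage 1 is solved, which is $0$ in the $\lambda=0$ world and $1-e^{-H\cdot(\text{total time})}$ in the $\lambda=H$ world. The relevant time is the monopolist's exit threshold $T_1$ in the one-firm case and the combined $2t_2$ in the two-firm case: both firms work on stage 1 until $t_2$ whenever neither has solved it (and when neither solves, neither discloses), so by independence the probability that neither succeeds given $\lambda=H$ is $e^{-Ht_2}e^{-Ht_2}=e^{-2Ht_2}$. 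Thus welfare is $\alpha(1-e^{-HT_1})$ versus $\alpha(1-e^{-2Ht_2})$, and it suffices to determine the sign of $2t_2-T_1$, equivalently of $e^{-2Ht_2}-e^{-HT_1}$.

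First I would pin down the monopoly threshold $T_1$. A single firm faces a pure optimal-stopping problem: absent success its belief decays as $\tilde{\lambda}(t)=\frac{\alpha e^{-Ht}}{\alpha e^{-Ht}+1-\alpha}$, and it exits at the unique time where the flow benefit of searching meets the flow cost, $\tilde{\lambda}(T_1)H(p_1+p_2)=c$. Writing $\beta:=\frac{c}{H(p_1+p_2)}$, this gives $\alpha e^{-HT_1}=\frac{\beta(1-\alpha)}{1-\beta}$ (with $\beta<\alpha<1$ by A1), and the proposition's threshold $\frac{H(p_1+p_2)}{2}\gtrless c$ is exactly $\beta\lessgtr\frac12$, i.e. the monopolist exits with belief above or below one half.

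Next I would derive the competition threshold $t_2$ from the equilibrium exit condition, where both imperfect-information effects enter. Following the marginal logic of Lemma 2, $t_2$ is characterized by the indifference $\tilde{\lambda}(t_2)H\,\bar{W}=c$, where now $\tilde{\lambda}(t_2)$ is the slower-decaying withhold-region belief (\ref{4}) (Effect 2) and $\bar{W}$ is the expected value of solving stage 1 at $t_2$ (Effect 1). Conditional on $\lambda=H$ and the observed silence, the opponent either has not solved stage 1 (and so exits at $t_2$, leaving the firm alone with value $p_1+p_2$) or has solved it and is secretly on stage 2; in the latter case, since the game is still ongoing, memorylessness of the stage-2 race makes each firm win with probability $\frac12$, so the firm's value is $\frac{p_1+p_2}{2}$. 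Weighting these by the conditional probabilities read off from the likelihood $g(\tau)=e^{-H\tau}+\int_0^\tau He^{-sH}e^{-(\tau-s)\mu}ds$ (with $\tau:=t_2-t_1$) gives $\bar{W}=(p_1+p_2)\frac{g(\tau)+e^{-H\tau}}{2g(\tau)}$. Substituting (\ref{4}) and $\bar{W}$ into the indifference condition and simplifying should yield a single relation of the form $P\big[g(\tau)(\tfrac12-\beta)+\tfrac12 e^{-H\tau}\big]=\beta(1-\alpha)$, where $P:=\alpha e^{-H(t_1+t_2)}$.

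The final step is the comparison, and the key simplification is that I never need to solve for $t_1$ and $t_2$ individually. Since $\alpha e^{-2Ht_2}=Pe^{-H(t_2-t_1)}=Pq$ with $q:=e^{-H\tau}$, the relation above gives $\alpha e^{-2Ht_2}=\frac{\beta(1-\alpha)q}{g(\tau)(\frac12-\beta)+\frac12 q}$, and comparing with $\alpha e^{-HT_1}=\frac{\beta(1-\alpha)}{1-\beta}$ reduces, after cross-multiplication, to the sign of $(\tfrac12-\beta)\big(q-g(\tau)\big)$. Because the integral in $g$ is strictly positive for $\tau>0$, one always has $g(\tau)>e^{-H\tau}=q$; and a genuine withhold region exists by Lemmas 3 and 4, so $\tau>0$. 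Hence $e^{-2Ht_2}-e^{-HT_1}$ carries the sign of $\beta-\frac12$: when $\beta>\frac12$ (i.e. $c>\frac{H(p_1+p_2)}{2}$) we get $2t_2<T_1$ and competition lowers welfare, when $\beta<\frac12$ we get $2t_2>T_1$ and it raises welfare, and $\beta=\frac12$ gives equality. The hard part will be the correct derivation of the competition exit condition — in particular getting $\bar{W}$ right (the memoryless stage-2 race and the conditional probabilities implicit in (\ref{4})) and verifying that all the $\alpha$, $t_1$, and $t_2$ dependence collapses so that only the universally-signed factor $(\frac12-\beta)(q-g)$ survives.
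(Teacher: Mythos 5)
Your proposal is correct and takes essentially the same route as the paper: both arguments reduce welfare to comparing the total search time $2t_2$ against the monopolist's exit time, and your relation $P\bigl[g(\tau)(\tfrac12-\beta)+\tfrac12 q\bigr]=\beta(1-\alpha)$ is exactly the paper's two-firm exit condition rearranged, with your $\bar W$ matching its numerator. The only (cosmetic) difference is in the last step: the paper evaluates the right-hand side of its exit condition at $t_2=t^{\ast}/2$, applies the mediant inequality to the added terms $\tfrac{H(p_1+p_2)}{2}C$ and $C$, and then invokes monotonicity in $t_2$, whereas you solve both conditions for $\alpha e^{-2Ht_2}$ and $\alpha e^{-HT_1}$ and compare directly, which reduces everything to the sign of $(\tfrac12-\beta)(q-g)$ and sidesteps the monotonicity step.
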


\begin{proof}
See Appendix B.
\end{proof}

Besides comparing the total time on R\&D in the two schemes as in our formal proof, we can also derive the result from the following condition:
\begin{align*}
c=H\cdot \frac{(p_{1}+p_{2})e^{-H(t_{2}-t_{1})}+\frac{p_{1}+p_{2}}{2}%
\int_{0}^{t_{2}-t_{1}}He^{-Hs}e^{-(t_{2}-t_{1}-s)\mu }ds}{%
e^{-H(t_{2}-t_{1})}+\int_{0}^{t_{2}-t_{1}}He^{-Hs}e^{-(t_{2}-t_{1}-s)\mu }ds+%
\frac{1-\alpha }{\alpha }e^{Ht_{1}+Ht_{2}}}.
\end{align*}
This is the exit condition of a firm in a competition which can withhold its solution to stage 1. On the other hand, a hypothetical exit condition for a firm, if information became perfect after $t_1$, is (we use $t_2'$ here to distinguish the hypothetical exit time from the original $t_2$)
\begin{align*}
c=H\cdot \frac{(p_{1}+p_{2})e^{-H(t'_{2}-t_{1}))}}{e^{-H(t'_{2}-t_{1})}+\frac{1-\alpha }{\alpha }e^{Ht_{1}+Ht'_{2}}}.
\end{align*}
The second scenario produces the same total length of time on R\&D without solving stage 1 as the case with a one-firm monopoly. In the first scenario (our original R\&D competition game), a new state, where the
opponent has finished stage 1 but not stage 2, is incorporated into the
update, introducing $\int_{0}^{t_{2}-t_{1}}He^{-Hs}e^{-(t_{2}-t_{1}-s)\mu }ds
$ on the denominator and $\frac{H(p_{1}+p_{2})}{2}%
\int_{0}^{t_{2}-t_{1}}He^{-Hs}e^{-(t_{2}-t_{1}-s)\mu }ds$ on the numerator.
The new term on the numerator is $\frac{H(p_{1}+p_{2})}{2}$ times the new
term on the denominator; thus which of $t_2$ and $t_2'$ is larger depends on
the value of $\frac{H(p_{1}+p_{2})}{2}$ and the original value of the
fraction, $c$.

The intuition behind this result is as follows. When $c$ is high, a firm in a competition
exits sooner, i.e., it's belief $\tilde{\lambda}(t)$ is still
relatively high when exiting. Thus, by the end of the withhold region, as
time passes by, the firm is not yet so concerned about the possibility that
the task is impossible (Effect 2); rather, it is scared away as its opponent
is increasingly likely to have finished the first stage (Effect 1). As
Effect 1 plays the decisive role, two firms in a competition are less persistent
than in a monopoly.

On the other hand, when $c$ is low, a firm exits later when it's belief $\tilde{\lambda}(t)$ is relatively low. Thus, by the end of the withhold region, the
firm is less worried that its opponent may have solved stage 1, but is more concerned that stage 1 may not be solvable after all. In contrast to the above situation, Effect 1 is now very small and Effect 2 dominates. As we have discussed, $\tilde{\lambda}(t)$
evolves more slowly due to imperfect information, and hence two firms in a competition are more persistent than in a monopoly.
\vspace{3ex}

\noindent\textbf{Numerical Example}

We assume that the arrival rate of a success is $H=1$, $L=0$ for the first
stage and $\mu =0.5$ for the second stage, with prior $\alpha=0.8$, cost of research $c=0.8$, revenue from the intermediate good $p_{1}=1$%
, and revenue from the final good $p_{2}=0.2$. The numerical
result is depicted by Figures 1 and 2.

\begin{figure}[h]
\centering
\includegraphics[width=5.5in]{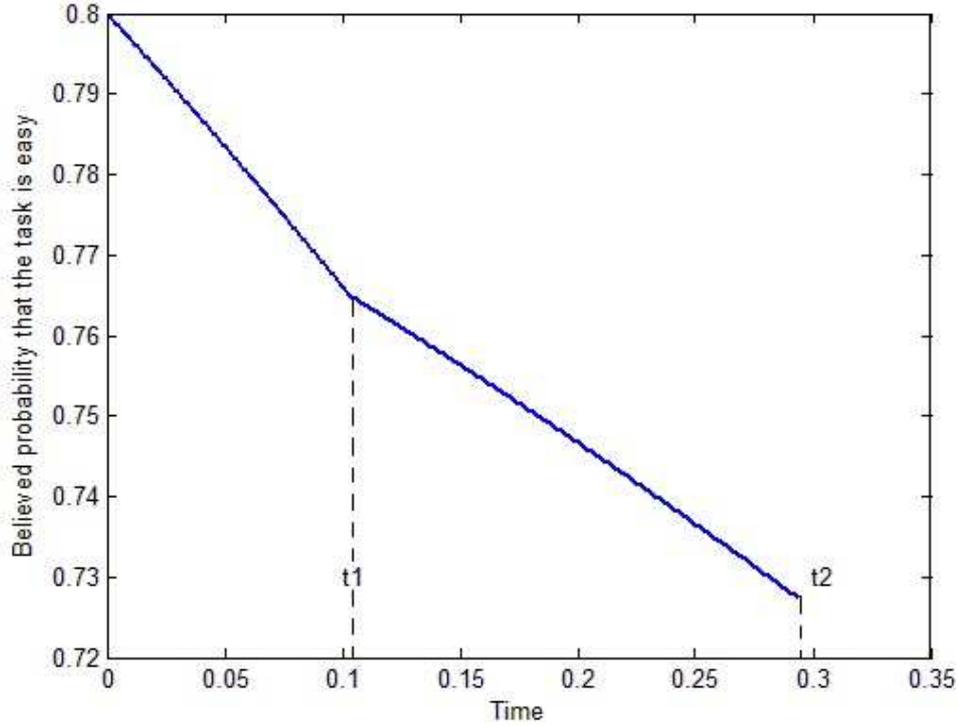}
\caption{Trajectory of belief}
\end{figure}

\begin{figure}[h]
\centering
\includegraphics[width=5.5in]{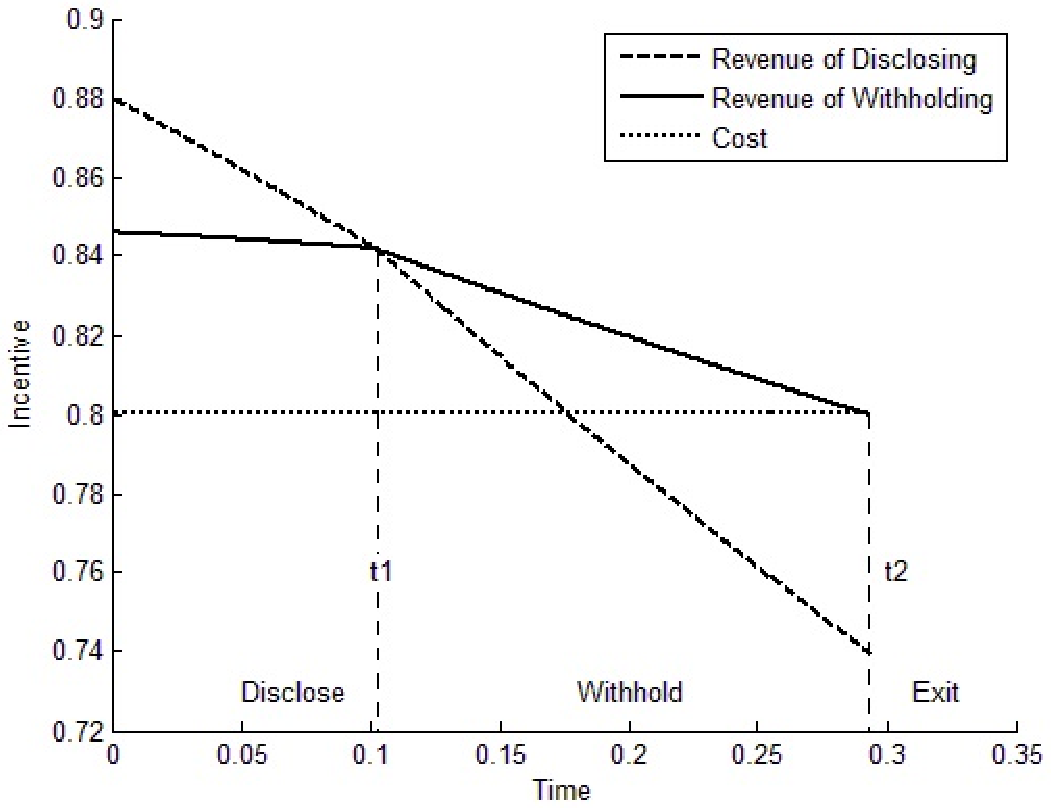}
\caption{Trajectory of cost and expected benefit}
\end{figure}

Figure 1 shows the trajectory of the firm's belief regarding the difficulty
of the first stage, i.e. $\tilde{\lambda}(t)$; Figure 2 illustrates the comparison
of a firm's revenue from "disclosing the solution now" and from "withholding the
solution until the end", provided that the firm has invented the
intermediate product. The firm exhibits disclose-withhold-exit behavior with
cutoffs $t_{1}=0.103$ and $t_{2}=0.294$, i.e., the disclose region is $%
[0,0.103]$, the withhold region is $[0.103,0.294]$, and the exit point is $%
0.294$. Thus, the total time spent on research by the two firms without solving stage 1 is $0.588$. For a
monopolist, the total time of research without is $0.693$. From the
society's perspective, a monopoly is preferred.

For comparison purpose, if we lower $c$ to $0.2$, i.e. smaller than $\frac{%
H(p_{1}+p_{2})}{2}$, the cutoffs become $t_{1}=1.346$ and $t_{2}=1.537$; in the two-firm competition, the
total time spent on research without solving stage 1 is $3.074$. However in such a case
a monopolist shall exit at $t=2.996$. From the society's perspective,
a competition is preferred.
\vspace{3ex}

\noindent\textbf{Discussion}

In scenarios where the final output is not indispensable, the implication of our model
varies. Suppose that the society values the final output at $%
p$, $p>p_{1}+p_{2}$. The socially optimal exit point is then%
\begin{align*}
\hat{t}=\max\{-\frac{\ln (\frac{c}{Hp-c}\frac{1-\alpha }{\alpha })}{H}, 0\}.
\end{align*}
for a monopoly firm, or $\frac{\hat{t}}{2}$ for a firm in a competition. Our model implies the following result on the socially desirable scheme:

If $\frac{H(p_{1}+p_{2})}{2}>c$, the society always prefers a monopoly.

If $\frac{H(p_{1}+p_{2})}{2}=c$, the society is indifferent between
the two.

If $\frac{H(p_{1}+p_{2})}{2}<c$, the total time spent on research in a two-firm competition is longer; in this
case, which scheme the society prefers depends on the expected value of the final output after a monopoly's exit point, as well as the difference between the total time spent on research under the two schemes.

\section{Conclusion}

Uncertainty in the difficulty of research is a common and important feature of an R\&D process. In this paper, we have proposed a model to analyze how uncertainty affects a firm's strategic behavior in a two-stage R\&D competition. The competing firms are free to choose whether and when to disclose their solution of the intermediate stage, and also whether and when to exit the entire competition. We find that this game has a unique symmetric equilibrium featuring two cutoffs of time: each firm will disclose its solution if it solves the intermediate stage by the first cutoff, withhold its solution between the two cutoffs, and exit if it has not solved the intermediate stage by the second cutoff. When the R\&D project is highly valuable to the society and welfare can be measured by the probability of completing the whole project, we show that a competition is not always the desired scheme:  the society may benefit from assigning the project to one single firm.

We believe that this paper may open the way for richer studies on R\&D competitions with uncertainty and related policy issues. One natural extension of the model is to solve the game with a larger number of firms and/or firms of different research capacity and cost. Also, there may be more than one research path to reach the final stage, and it is interesting to explore how competing firms choose to experiment and learn about distinct paths.

\newpage
\bibliographystyle{acm}
\bibliography{reference}

\newpage
\appendix
\section{Proof of Theorem 1}

Combining Lemmas 2, 3, and 4, we know that every equilibrium is a
disclose-withhold-exit equilibrium. Now we characterize the unique cutoffs $t_1, t_2$ and prove that the corresponding strategy profile is an equilibrium. The proof consists of four parts.

\textbf{Part I: characterize $t_1$ and $t_2$.} In the disclose region, the payoff from disclosure is $p_1+\frac{p_{2}}{2}$; on the other hand, the expected payoff from withholding the solution from $t_1$ to $t_2$ is
\begin{eqnarray}
&&e^{-H(t_{2}-t_{1})}e^{-\mu
(t_{2}-t_{1})}(p_{1}+p_{2})+\int_{t_{1}}^{t_{2}}\mu e^{-\mu
(s-t_{1})}e^{-H(s-t_{1})}ds (p_{1}+p_{2}) \nonumber \\
&&+\int_{t_{1}}^{t_{2}}He^{-H(s-t_{1})}e^{-\mu (s-t_{1})}ds 
\frac{(p_{1}+p_{2})}{2} \nonumber \\
&=&[-\frac{\mu }{H+\mu }e^{-H(s-t_{1})-\mu
(s-t_{1})}|_{t_{1}}^{t_{2}}+e^{-H(t_{2}-t_{1})-\mu
(t_{2}-t_{1})}](p_{1}+p_{2}) \nonumber \\
&&+[-\frac{H}{H+\mu }e^{-H(s-t_{1})-\mu (s-t_{1})}|_{t_{1}}^{t_{2}}]\frac{%
(p_{1}+p_{2})}{2} \nonumber \\
&=&\frac{(p_{1}+p_{2})}{2}[\frac{H}{H+\mu }e^{-(H+\mu )(t_{2}-t_{1})}+\frac{%
H+2\mu }{H+\mu }]. \nonumber
\end{eqnarray}

At cutoff $t_{1}$, the firm should be indifferent between disclosing and
withholding the solution:
\begin{eqnarray*}
\frac{(p_{1}+p_{2})}{2}[\frac{H}{H+\mu }e^{-(H+\mu )(t_{2}-t_{1})}+\frac{%
H+2\mu }{H+\mu }] &=&p_{1}+\frac{p_{2}}{2} \\
t_{2}-t_{1} &=&\frac{\ln \frac{\frac{p_{1}+\frac{p_{2}}{2}}{\frac{p_{1}+p_{2}%
}{2}}-\frac{H+2\mu }{H+\mu }}{\frac{H}{H+\mu }}}{-(H+\mu )} \\
t_{2}-t_{1} &=&\frac{\ln \frac{H(p_{1}+p_{2})}{p_{1}H-p_{2}\mu }}{H+\mu}.
\end{eqnarray*}%
The above equation implies that the difference between $t_{1}$ and $t_{2}$, or the time length of the withhold region, is constant when $t_2$ is sufficiently large. Thus, given any $t_{2}$, we can always find a unique $t_1$ (when $t_2<\frac{\ln \frac{H(p_{1}+p_{2})}{p_{1}H-p_{2}\mu }}{H+\mu}$, simply let $t_1=0$).

The next step is to show that $t_2$ is unique as well. Given that $\lambda=H$, through time length $t-t_{1}$, the probability that the opponent has not solved stage 1 is $e^{-H(t-t_{1})}$, the probability that it has solved stage 1 but has not solved stage 2 is $\int_{0}^{t-t_{1}}He^{-Hs}e^{-(t-t_{1}-s)\mu }ds$. Thus at time $t>t_{1}$, if the game hasn't been terminated by the opponent's success in stage 2, by Bayesian update, the exit condition is:%
\begin{equation}
c=\tilde{\lambda}(t)H\frac{(p_{1}+p_{2})e^{-H(t-t_{1})}+\frac{(p_{1}+%
p_{2})}{2}\int_{0}^{t-t_{1}}He^{-Hs}e^{-(t-t_{1}-s)\mu }ds}{%
e^{-H(t-t_{1})}+\int_{0}^{t-t_{1}}He^{-Hs}e^{-(t-t_{1}-s)\mu }ds}.  \label{22}
\end{equation}

Next we show that the RHS of (\ref{22}) is decreasing in $t$:%
\begin{eqnarray}
&&\tilde{\lambda}(t)H\frac{(p_{1}+p_{2})e^{-H(t-t_{1})}+\frac{(p_{1}+%
p_{2})}{2}\int_{0}^{t-t_{1}}He^{-Hs}e^{-(t-t_{1}-s)\mu }ds}{%
e^{-H(t-t_{1})}+\int_{0}^{t-t_{1}}He^{-Hs}e^{-(t-t_{1}-s)\mu }ds}  \nonumber
\\
&=&\tilde{\lambda}(t)H\{\frac{(p_{1}+p_{2})}{2}+\frac{\frac{(p_{1}+%
p_{2})}{2}e^{-H(t-t_{1})}}{e^{-H(t-t_{1})}+\frac{H}{\mu -H}%
[e^{-H(t-t_{1})}-e^{-(t-t_{1})\mu }]}\}  \nonumber \\
&=&\tilde{\lambda}(t)H\{\frac{(p_{1}+p_{2})}{2}+\frac{\frac{(p_{1}+%
p_{2})}{2}}{1+\frac{H}{\mu -H}[1-e^{(H-\mu )(t-t_{1})}]}\}.  \label{26}
\end{eqnarray}%
If $H>\mu $, then $1-e^{(H-\mu )(t-t_{1})}<0$ and $\mu -H<0$. As $t$
increases, $1-e^{(H-\mu )(t-t_{1})}$ decreases; $\frac{H}{\mu -H}%
[1-e^{(H-\mu )(t-t_{1})}]$ increases; (\ref{26}) decreases. If $H<\mu $, then $%
1-e^{(H-\mu )(t-t_{1})}>0$ and $\mu -H>0$. As $t$ increases, $1-e^{(H-\mu
)(t-t_{1})}$ increases; $\frac{H}{\mu -H}[1-e^{(H-\mu )(t-t_{1})}]$
increases; (\ref{26}) decreases.

As the RHS of (\ref{22}) is decreasing in $t$, also by Lemma 3, (\ref{22}) has a unique solution. We propose that this solution is $t_{2}$ by proving that the firm won't deviate (i.e. won't stay in the game) after $t_{2}$. We evaluate the firm's incentive at time $t_{2}+\Delta t$. Staying in the game yields an instantaneous rate of expect payoff:%
\begin{eqnarray}
&\tilde{\lambda}(t)H\frac{(p_{1}+p_{2})e^{-H(t_{2}-t_{1})}+\frac{%
(p_{1}+p_{2})}{2}\int_{0}^{t_{2}-t_{1}}He^{-Hs}e^{-(t_{2}+\Delta
t-t_{1}-s)\mu }ds}{e^{-H(t_{2}-t_{1})}+%
\int_{0}^{t_{2}-t_{1}}He^{-Hs}e^{-(t_{2}+\Delta t-t_{1}-s)\mu }ds}  \nonumber
\\
=&\alpha H\frac{e^{-2Ht_{1}-H(t_{2}+\Delta t-t_{1})}[(p_{1}+p%
_{2})e^{-H(t_{2}-t_{1})}+\frac{(p_{1}+p_{2})}{2}%
\int_{0}^{t_{2}-t_{1}}He^{-Hs}e^{-(t_{2}+\Delta t-t_{1}-s)\mu }ds]}{%
\alpha e^{-2Ht_{1}-H(t_{2}+\Delta
t-t_{1})}[e^{-H(t_{2}-t_{1})}+\int_{0}^{t_{2}-t_{1}}He^{-Hs}e^{-(t_{2}+%
\Delta t-t_{1}-s)\mu }ds]+1-\alpha}  \nonumber \\
=&H\frac{(p_{1}+p_{2})}{2} \nonumber \\
&\{1+\frac{e^{-2Ht_{1}-H(t_{2}+\Delta
t-t_{1})}e^{-H(t_{2}-t_{1})}-\frac{1-\alpha}{\alpha}}{e^{-2Ht_{1}-H(t_{2}+\Delta
t-t_{1})}\{e^{-H(t_{2}-t_{1})}+\frac{H}{\mu -H}[e^{-H(t_{2}-t_{1})}-e^{-\mu
(t_{2}-t_{1})}]e^{-\Delta t\mu }\}+\frac{1-\alpha}{\alpha}}\}.  \label{27}
\end{eqnarray}

At $\Delta t=0$, (\ref{26})=(\ref{27}), thus the instantaneous payoff rate is
continuous at $t_{2}$. As $\Delta t$ increases, $e^{-2Ht_{1}-H(t_{2}+\Delta
t-t_{1})}e^{-H(t_{2}-t_{1})}-1$ decreases and is negative, the term
\begin{align*}
e^{-2Ht_{1}-H(t_{2}+\Delta t-t_{1})}\{e^{-H(t_{2}-t_{1})}+\frac{H}{\mu -H}[e^{-H(t_{2}-t_{1})}-e^{-\mu (t_{2}-t_{1})}]e^{-\Delta t\mu }\}
\end{align*}
decreases and is positive, thus (\ref{27}) decreases in $\Delta t$. As the exit
condition holds at $\Delta t=0$ (i.e. $t=t_{2}$), the instantaneous payoff rate
is smaller than $c$ for all $\Delta t>0$.

\textbf{Part II: prove that a firm will never exit before $t_2$.} The above argument has proved that exit is optimal after $t_2$. For $t\in \lbrack 0,t_{1}]$, the instantaneous payoff rate from staying for another $dt$ and disclosing the solution if stage 1 is solved is
\begin{eqnarray}
\tilde{\lambda}(t)[H(p_1+\frac{p_{2}}{2})+H\frac{p_{2}}{2}]=\tilde{\lambda}(t)H(p_{1}+p_{2}).
\label{19}
\end{eqnarray}
By (\ref{22}) we know that (\ref{19}) must be greater than $c$.

For $t\in (t_{1}, t_2]$, the instantaneous payoff rate from continuing research (and adopting a
withhold strategy) before solving stage 1, rather than exiting immediately, is%
\begin{eqnarray}
&&\frac{%
\begin{array}{c}
\tilde{\lambda}(t)H\{\frac{p_{1}+p_{2}}{2}%
\int_{t_{1}}^{t}He^{-H(s-t_{1})}e^{-\mu
(t-s)}ds+e^{-H(t-t_{1})}[(p_{1}+p_{2})e^{-H(t_{2}-t)}e^{-\mu (t_{2}-t)} \\ 
+(p_{1}+p_{2})\int_{t}^{t_{2}}\mu e^{-\mu (s-t)}e^{-H(s-t)}ds+\frac{%
p_{1}+p_{2}}{2}\int_{t}^{t_{2}}He^{-H(s-t)}e^{-\mu (s-t)}ds]\}%
\end{array}%
}{\int_{t_{1}}^{t}He^{-H(s-t_{1})}e^{-\mu (t-s)}ds+e^{-H(t-t_{1})}} 
\nonumber \\
&=&\frac{%
\begin{array}{c}
H\{\frac{p_{1}+p_{2}}{2}\int_{t_{1}}^{t}He^{-H(s-t_{1})}e^{-\mu
(t-s)}ds+e^{-H(t-t_{1})}[(p_{1}+p_{2})e^{-H(t_{2}-t)}e^{-\mu (t_{2}-t)} \\ 
+(p_{1}+p_{2})\int_{t}^{t_{2}}\mu e^{-\mu (s-t)}e^{-H(s-t)}ds+\frac{%
p_{1}+p_{2}}{2}\int_{t}^{t_{2}}He^{-H(s-t)}e^{-\mu (s-t)}ds]\}%
\end{array}%
}{\int_{t_{1}}^{t}He^{-H(s-t_{1})}e^{-\mu
(t-s)}ds+e^{-H(t-t_{1})}+\frac{1-\alpha}{\alpha}e^{2Ht_{1}+H(t-t_{1})}}.  \label{5}
\end{eqnarray}

The numerator of (\ref{5}) is%
\begin{eqnarray}
&&H\{\frac{p_{1}+p_{2}}{2}\int_{t_{1}}^{t}He^{-H(s-t_{1})}e^{-\mu
(t-s)}ds+e^{-H(t-t_{1})}[(p_{1}+p_{2})e^{-H(t_{2}-t)}e^{-\mu (t_{2}-t)} 
\nonumber \\
&&+(p_{1}+p_{2})\int_{t}^{t_{2}}\mu e^{-\mu (s-t)}e^{-H(s-t)}ds+\frac{%
p_{1}+p_{2}}{2}\int_{t}^{t_{2}}He^{-H(s-t)}e^{-\mu (s-t)}ds]\}  \nonumber \\
&=&H\{\frac{p_{1}+p_{2}}{2}\frac{H}{-(H-\mu )}e^{Ht_{1}-\mu t}(e^{-(H-\mu
)t}-e^{-(H-\mu )t_{1}})+e^{-H(t-t_{1})}[(p_{1}+p_{2})e^{-H(t_{2}-t)}e^{-\mu
(t_{2}-t)}  \nonumber \\
&&+(p_{1}+p_{2})\frac{\mu }{-(H+\mu )}e^{Ht+\mu t}(e^{-(H+\mu
)t_{2}}-e^{-(H+\mu )t}) \nonumber \\
&&+\frac{p_{1}+p_{2}}{2}\frac{H}{-(H+\mu )}e^{Ht+\mu
t}(e^{-(H+\mu )t_{2}}-e^{-(H+\mu )t})]\}  \nonumber \\
&=&H\{\frac{p_{1}+p_{2}}{2}(-\frac{H}{H-\mu })(e^{H(t_{1}-t)}-e^{\mu
(t_{1}-t)})+e^{-H(t-t_{1})}[(p_{1}+p_{2})e^{-H(t_{2}-t)}e^{-\mu (t_{2}-t)} 
\nonumber \\
&&+(p_{1}+p_{2})(-\frac{\mu }{H+\mu })(e^{-(H+\mu )(t_{2}-t)}-1)+\frac{%
p_{1}+p_{2}}{2}(-\frac{H}{H+\mu })(e^{-(H+\mu )(t_{2}-t)}-1)]\}  \nonumber \\
&=&H(p_{1}+p_{2})\{(-\frac{\frac{1}{2}H}{H-\mu })(e^{H(t_{1}-t)}-e^{\mu
(t_{1}-t)})+e^{-H(t-t_{1})}[\frac{\frac{1}{2}H}{H+\mu }e^{-(H+\mu
)(t_{2}-t)}+\frac{\frac{1}{2}H+\mu }{H+\mu }]\}  \nonumber \\
&=&H(p_{1}+p_{2})e^{-H(t-t_{1})}[-\frac{\frac{1}{2}H}{H-\mu }+\frac{\frac{1%
}{2}H}{H-\mu }e^{(H-\mu )(t-t_{1})}+\frac{\frac{1}{2}H+\mu }{H+\mu }+\frac{%
\frac{1}{2}H}{H+\mu }e^{-(H+\mu )(t_{2}-t)}].  \label{6}
\end{eqnarray}
Omitting the constant term $H(p_{1}+p_{2})$, we focus on the following
expression: $e^{-H(t-t_{1})}[-\frac{\frac{1}{2}H}{H-\mu }+\frac{\frac{1}{2}H%
}{H-\mu }e^{(H-\mu )(t-t_{1})}+\frac{\frac{1}{2}H+\mu }{H+\mu }+\frac{\frac{1%
}{2}H}{H+\mu }e^{-(H+\mu )(t_{2}-t)}]$. Take the first-order derivative with respect to 
$t$:
\begin{eqnarray}
&&\frac{\partial \{e^{-H(t-t_{1})}[-\frac{\frac{1}{2}H}{H-\mu }+\frac{\frac{1%
}{2}H}{H-\mu }e^{(H-\mu )(t-t_{1})}+\frac{\frac{1}{2}H+\mu }{H+\mu }+\frac{%
\frac{1}{2}H}{H+\mu }e^{-(H+\mu )(t_{2}-t)}]\}}{\partial t}  \nonumber \\
&=&-He^{-H(t-t_{1})}(-\frac{\frac{1}{2}H}{H-\mu }+\frac{\frac{1}{2}H+\mu }{%
H+\mu })-\frac{\frac{1}{2}H\mu }{H-\mu }e^{-\mu (t-t_{1})}+\frac{\frac{1}{2}%
H\mu }{H+\mu }e^{-H(t_{2}-t_{1})-\mu (t_{2}-t)}  \nonumber \\
&=&\frac{\frac{1}{2}H\mu }{H+\mu }[-e^{-H(t-t_{1})}+e^{-H(t_{2}-t_{1})-\mu
(t_{2}-t)}]+\frac{\frac{1}{2}\mu H}{H-\mu }(e^{-H(t-t_{1})}-e^{-\mu
(t-t_{1})}).  \label{7}
\end{eqnarray}%
The first term of (\ref{7}) is smaller than 0 since $%
-H(t-t_{1})>-H(t_{2}-t_{1})-\mu (t_{2}-t)$, also, the second term is smaller
than 0. Hence the numerator of (\ref{5}) is decreasing in $t$.

The denominator of (\ref{5}) is%
\begin{eqnarray*}
&&\int_{t_{1}}^{t}He^{-H(s-t_{1})}e^{-\mu
(t-s)}ds+e^{-H(t-t_{1})}+e^{2Ht_{1}+H(t-t_{1})} \\
&=&\frac{-\mu }{H-\mu }e^{H(t_{1}-t)}+\frac{H}{H-\mu }e^{\mu
(t_{1}-t)}+\frac{1-\alpha}{\alpha}e^{2Ht_{1}+H(t-t_{1})}.
\end{eqnarray*}
Take the first-order derivative with respect to $t$:
\begin{eqnarray}
&&\frac{\partial \lbrack \frac{-\mu }{H-\mu }e^{H(t_{1}-t)}+\frac{H}{H-\mu }%
e^{\mu (t_{1}-t)}+\frac{1-\alpha}{\alpha}e^{2Ht_{1}+H(t-t_{1})}]}{\partial t}  \nonumber \\
&=&\frac{H\mu }{H-\mu }[e^{-H(t-t_{1})}-e^{-\mu
(t-t_{1})}]+\frac{1-\alpha}{\alpha}He^{2Ht_{1}+H(t-t_{1})}.  \label{29}
\end{eqnarray}
The first-order derivative of $e^{-H(t-t_{1})}-e^{-\mu (t-t_{1})}$ with respect to $t$ is%
\[
-He^{-H(t-t_{1})}+\mu e^{-\mu (t-t_{1})},
\]%
which is 0 at%
\begin{eqnarray*}
He^{-H(t-t_{1})} =\mu e^{-\mu (t-t_{1})}.
\end{eqnarray*}
Thus the minimum value of $\frac{H\mu }{H-\mu }[e^{-H(t-t_{1})}-e^{-\mu
(t-t_{1})}]$ is obtained when $He^{-H(t-t_{1})} =\mu e^{-\mu (t-t_{1})}$. Then we have
\begin{eqnarray*}
&&\frac{H\mu }{H-\mu }[e^{-H(t-t_{1})}-e^{-\mu
(t-t_{1})}]+\frac{1-\alpha}{\alpha}He^{2Ht_{1}+H(t-t_{1})} \\
&\geq&\frac{1-\alpha}{\alpha}He^{2Ht_{1}+H(t-t_{1})}+\frac{H\mu }{H-\mu }[e^{-H(t-t_{1})}-e^{-\mu (t-t_{1})}] \\
&=&\frac{1-\alpha}{\alpha}He^{2Ht_{1}+H(t-t_{1})}+\frac{H\mu }{H-\mu }e^{-H(t-t_{1})}[1-\frac{H}{\mu}] \\
&=&\frac{1-\alpha}{\alpha}He^{2Ht_{1}+H(t-t_{1})}-He^{-H(t-t_{1})}\\
&\geq&\frac{1-\alpha}{\alpha}He^{2Ht_{1}}-He^{-H(t-t_{1})}.
\end{eqnarray*}%
The above inequality is binding only when $H=\mu$ and $t=t_1=0$. Thus to prove that the denominator of (\ref{5}) is increasing in $t$ it suffices to prove that $\frac{1-\alpha}{\alpha}e^{2Ht_{1}}\geq 1$. When $\alpha<\frac{1}{2}$, it is clear that $\frac{1-\alpha}{\alpha}e^{2Ht_{1}}\geq 1$ since $t_1\geq 0$. When $\alpha\geq\frac{1}{2}$, from the exit condition (\ref{22}), we have
\begin{align*}
c=&H \frac{(p_{1}+p_{2})e^{-H(t_{2}-t_{1})}+\frac{p_{1}+p_{2}}{2}%
\int_{0}^{t_{2}-t_{1}}He^{-Hs}e^{-(t_{2}-t_{1}-s)\mu }ds}{%
e^{-H(t_{2}-t_{1})}+\int_{0}^{t_{2}-t_{1}}He^{-Hs}e^{-(t_{2}-t_{1}-s)\mu
}ds+\frac{1-\alpha}{\alpha}e^{Ht_{1}+Ht_{2}}}\\
e^{Ht_{1}+Ht_{2}}=&\frac{\alpha}{1-\alpha}(\frac{H(p_{1}+p_{2})[\frac{\frac{1}{2}H-\mu }{H-\mu }%
e^{-H(t_{2}-t_{1})}+\frac{\frac{1}{2}H}{H-\mu }e^{-\mu (t_{2}-t_{1})}]}{c}\\
&+\frac{\mu }{H-\mu }e^{-H(t_{2}-t_{1})}-\frac{H}{H-\mu }e^{-\mu(t_{2}-t_{1})})\\
e^{2Ht_{1}}=&\frac{\alpha}{1-\alpha}(\frac{H(p_{1}+p_{2})[\frac{\frac{1}{2}H-\mu }{H-\mu }%
e^{-H(t_{2}-t_{1})}+\frac{\frac{1}{2}H}{H-\mu }e^{-\mu (t_{2}-t_{1})}]}{c}\\
&+\frac{\mu }{H-\mu }e^{-H(t_{2}-t_{1})}-\frac{H}{H-\mu }e^{-\mu(t_{2}-t_{1})})e^{H(t_1-t_2)}.
\end{align*}
Note that we have proved that $t_2-t_1$ is a constant when $t_2\geq\frac{\ln \frac{H(p_{1}+p_{2})}{p_{1}H-p_{2}\mu }}{H+\mu}$, and that by assumption A.3 we know that $t_2\geq\frac{\ln \frac{H(p_{1}+p_{2})}{p_{1}H-p_{2}\mu }}{H+\mu}$ for every $\alpha\geq\frac{1}{2}$. By letting $\alpha=\frac{1}{2}$ and observing that $t_1\geq 0$, we have
\begin{align*}
(\frac{H(p_{1}+p_{2})[\frac{\frac{1}{2}H-\mu }{H-\mu }%
e^{-H(t_{2}-t_{1})}+\frac{\frac{1}{2}H}{H-\mu }e^{-\mu (t_{2}-t_{1})}]}{c}&\\
+\frac{\mu }{H-\mu }e^{-H(t_{2}-t_{1})}-\frac{H}{H-\mu }e^{-\mu(t_{2}-t_{1})})e^{H(t_1-t_2)}&\geq1\\
e^{2Ht_{1}}&\geq\frac{\alpha}{1-\alpha}\\
\frac{1-\alpha}{\alpha}e^{2Ht_{1}}&\geq 1.
\end{align*}
Therefore the denominator of (\ref{5}) is increasing in $t$. In conclusion, (\ref{5}) is decreasing in $t$ for $t\in \lbrack t_{1},t_{2}]$. Notice that (\ref{5})$=c$ at $t=t_{2}$. Thus the value of staying is always larger than the cost, and the firm will not exit.

\textbf{Part III: prove that withholding the solution after solving stage 1 is optimal in the withhold region.} At time $t\in[t_1, t_2]$, the expected payoff from disclosure is%
\begin{equation}
\frac{(p_{1}+\frac{p_{2}}{2})e^{-H(t-t_{1})}+\frac{p_{1}+p_{2}}{2}%
\int_{t_{1}}^{t}He^{-H(s-t_{1})}e^{-\mu (t-s)}ds}{e^{-H(t-t_{1})}+%
\int_{t_{1}}^{t}He^{-H(s-t_{1})}e^{-\mu (t-s)}ds}.  \label{8}
\end{equation}
The expected payoff from withholding the solution is%
\begin{equation}
\frac{%
\begin{array}{c}
\frac{p_{1}+p_{2}}{2}\int_{t_{1}}^{t}He^{-H(s-t_{1})}e^{-\mu
(t-s)}ds+e^{-H(t-t_{1})}[(p_{1}+p_{2})e^{-H(t_{2}-t)}e^{-\mu (t_{2}-t)} \\ 
+(p_{1}+p_{2})\int_{t}^{t_{2}}\mu e^{-H(s-t)}e^{-\mu (s-t)}ds+\frac{%
p_{1}+p_{2}}{2}\int_{t}^{t_{2}}He^{-H(s-t)}e^{-\mu (s-t)}ds]%
\end{array}%
}{e^{-H(t-t_{1})}+\int_{t_{1}}^{t}He^{-H(s-t_{1})}e^{-\mu (t-s)}ds}.
\label{9}
\end{equation}
Then the difference between withholding and disclosing the solution is (\ref%
{9})-(\ref{8}):%
\begin{eqnarray}
&&\frac{%
\begin{array}{c}
\frac{p_{1}+p_{2}}{2}\int_{t_{1}}^{t}He^{-H(s-t_{1})}e^{-\mu
(t-s)}ds+e^{-H(t-t_{1})}[(p_{1}+p_{2})e^{-H(t_{2}-t)}e^{-\mu (t_{2}-t)} \\ 
+(p_{1}+p_{2})\int_{t}^{t_{2}}\mu e^{-H(s-t)}e^{-\mu (s-t)}ds+\frac{%
p_{1}+p_{2}}{2}\int_{t}^{t_{2}}He^{-H(s-t)}e^{-\mu (s-t)}ds] \\ 
-(p_{1}+\frac{p_{2}}{2})e^{-H(t-t_{1})}-\frac{p_{1}+p_{2}}{2}%
\int_{t_{1}}^{t}He^{-H(s-t_{1})}e^{-\mu (t-s)}ds%
\end{array}%
}{e^{-H(t-t_{1})}+\int_{t_{1}}^{t}He^{-H(s-t_{1})}e^{-\mu (t-s)}ds} 
\nonumber \\
&=&\frac{%
\begin{array}{c}
e^{-H(t-t_{1})}[(p_{1}+p_{2})e^{-H(t_{2}-t)}e^{-\mu
(t_{2}-t)}+(p_{1}+p_{2})\int_{t}^{t_{2}}\mu e^{-H(s-t)}e^{-\mu (s-t)}ds \\ 
+\frac{p_{1}+p_{2}}{2}\int_{t}^{t_{2}}He^{-H(s-t)}e^{-\mu (s-t)}ds-(p_{1}+%
\frac{p_{2}}{2})]%
\end{array}%
}{e^{-H(t-t_{1})}+\int_{t_{1}}^{t}He^{-H(s-t_{1})}e^{-\mu (t-s)}ds} 
\nonumber \\
&=&\frac{e^{-H(t-t_{1})}[(p_{1}+p_{2})e^{-H(t_{2}-t)}e^{-\mu
(t_{2}-t)}+(p_{1}+p_{2})\frac{\mu +\frac{1}{2}H}{\mu +H}(1-e^{-(\mu
+H)(t_{2}-t)})-(p_{1}+\frac{p_{2}}{2})]}{e^{-H(t-t_{1})}+%
\int_{t_{1}}^{t}He^{-H(s-t_{1})}e^{-\mu (t-s)}ds}  \nonumber \\
&=&\frac{e^{-H(t-t_{1})}[\frac{-\frac{1}{2}H}{\mu +H}p_{1}+\frac{\frac{1}{2}%
\mu }{\mu +H}p_{2}+(p_{1}+p_{2})\frac{\frac{1}{2}H}{\mu +H}e^{-(\mu
+H)(t_{2}-t)}]}{e^{-H(t-t_{1})}-\frac{1}{H-\mu }[e^{H(t_{1}-t)}-e^{\mu
(t_{1}-t)}]}  \nonumber \\
&=&\frac{\frac{-\frac{1}{2}H}{\mu +H}p_{1}+\frac{\frac{1}{2}\mu }{\mu +H}%
p_{2}+(p_{1}+p_{2})\frac{\frac{1}{2}H}{\mu +H}e^{-(\mu +H)(t_{2}-t)}}{1-%
\frac{1}{H-\mu }[1-e^{(H-\mu )(t-t_{1})}]}.  \label{10}
\end{eqnarray}

The numerator is increasing in $t$. The denominator is always positive.
Noticeably, by our definition of equilibrium, (\ref{10}) is zero at $t=t_{1}$%
. Then as $t$ increases from $t_{1}$, the numerator is always
positive. Thus $(\ref{10})>0$ for $t\in \lbrack t_{1},t_{2}]$. Hence,
withholding the solution is confirmed to be superior to disclosure
everywhere in the withhold region.

\textbf{Part IV: prove that disclosure is optimal after solving stage 1 in the disclose region.} We discuss two types of deviation: type-1 deviation where the firm will withhold the solution for some period and then disclose the solution, and type-2 deviation where the firm will withhold the solution all the way until exit. We verify that both types of deviation are inferior to immediate disclosure. In the disclose region, i.e., $[0,t_{1}]$, the payoff from disclosure is $p_{1}+\frac{1}{2}p_{2}$.

Suppose that the firm uses type-1 deviation, i.e. it withholds for $[t,t^{\prime }]$ and then discloses the solution. By the proof of Lemma 4, the payoff is always less than $p_{1}+\frac{1}{2}p_{2}$, and thus any type-1 deviation is not profitable.

Suppose that the firm uses type-2 deviation, i.e. it withholds from $t$ until exiting at $t_2$, and suppose that
the expected payoff at time $t_{1}$ is $U$; then, the expected payoff at
time $t$ is%
\begin{eqnarray}
&&e^{-H(t_{1}-t)}e^{-\mu (t_{1}-t)}U+\int_{t}^{t_{1}}\mu e^{-\mu
(s-t)}e^{-H(s-t)}ds (p_{1}+p_{2}) \nonumber \\
&&+\int_{t}^{t_{1}}He^{-H(s-t)}e^{-\mu (s-t)}ds \frac{%
(p_{1}+p_{2})}{2}. \label{16}
\end{eqnarray}
The profit generated by the deviation is (\ref{16}) minus the original
payoff $p_{1}+\frac{1}{2}p_{2}$. However, the firm should be indifferent
between disclosing and withholding the solution at $t_{1}$; thus, $U=p_{1}+\frac{1}{2}%
p_{2} $. Then, following Lemma 4, any type-2 deviation is not profitable. Hence, disclosure is confirmed to be superior to withholding the solution everywhere in the disclose region. This completes the proof.

\section{Proof of Proposition 1}

In a two-firm equilibrium, combine the exit condition and the information
updating process, we have%
\begin{equation}
c=H \frac{(p_{1}+p_{2})e^{-H(t_{2}-t_{1})}+\frac{p_{1}+p_{2}}{2}%
\int_{0}^{t_{2}-t_{1}}He^{-Hs}e^{-(t_{2}-t_{1}-s)\mu }ds}{%
e^{-H(t_{2}-t_{1})}+\int_{0}^{t_{2}-t_{1}}He^{-Hs}e^{-(t_{2}-t_{1}-s)\mu }ds+%
\frac{1-\alpha }{\alpha }e^{Ht_{1}+Ht_{2}}}.  \label{46}
\end{equation}

For one firm without competition, its exit condition is%
\begin{eqnarray}
\frac{e^{-Ht}}{e^{-Ht}+\frac{1-\alpha }{\alpha }}H(p_{1}+p_{2}) &=&c
\label{48} \\
t &=&-\frac{\ln [\frac{c}{H(p_{1}+p_{2})-c}\frac{1-\alpha }{\alpha }]}{H}. 
\nonumber
\end{eqnarray}

We call this exit time $t^{\ast }$ and compare $t^{\ast }$ with $2t_{2}$. Rearranging (\ref{48}) yields%
\begin{eqnarray}
c &=&H\cdot \frac{(p_{1}+p_{2})e^{-2H\frac{t^{\ast }}{2}}}{e^{-2H\frac{%
t^{\ast }}{2}}+\frac{1-\alpha }{\alpha }}  \nonumber \\
&=&H\cdot \frac{(p_{1}+p_{2})e^{-(H\frac{t^{\ast }}{2}-t_{1})}}{e^{-(H\frac{%
t^{\ast }}{2}-t_{1})}+\frac{1-\alpha }{\alpha }e^{Ht_{1}+H\frac{t^{\ast }}{2}%
}}.  \label{50}
\end{eqnarray}%

We substitute $\frac{t^{ast}}{2}$ for $t_2$ in (\ref{46}), and we denote $H(p_{1}+p_{2})e^{-(H\frac{t^{\ast }}{2}-t_{1})}$ as A, $e^{-(H\frac{%
t^{\ast }}{2}-t_{1})}+\frac{1-\alpha }{\alpha }e^{Ht_{1}+H\frac{t^{\ast }}{2}%
}$ as B, and $\int_{0}^{\frac{t^{\ast }}{2}%
-t_{1}}He^{-Hs}e^{-(\frac{t^{\ast }}{2}-t_{1}-s)\mu }ds$ as C. From (\ref{50}%
), $c=\frac{A}{B}$, and we know that%
\[
\frac{A+\frac{H(p_{1}+p_{2})}{2}C}{B+C}\left\{ 
\begin{array}{c}
>c\text{ if and only if }\frac{H(p_{1}+p_{2})}{2}>c; \\ 
=c\text{ if and only if }\frac{H(p_{1}+p_{2})}{2}=c; \\ 
<c\text{ if and only if }\frac{H(p_{1}+p_{2})}{2}<c.%
\end{array}%
\right. 
\]

From the proof of Theorem 1, the RHS of (\ref{46}) is decreasing in $t_{2}$.
Thus, in equilibrium%
\[
t_{2}\left\{ 
\begin{array}{c}
>\frac{t^{\ast }}{2}\text{ if and only if }\frac{H(p_{1}+p_{2})}{2}>c; \\ 
=\frac{t^{\ast }}{2}\text{ if and only if }\frac{H(p_{1}+p_{2})}{2}=c; \\ 
<\frac{t^{\ast }}{2}\text{ if and only if }\frac{H(p_{1}+p_{2})}{2}<c.%
\end{array}%
\right. 
\]

\end{document}